\newcommandx{\iman}[2][1=]{\todo[linecolor=orange,backgroundcolor=orange!25,bordercolor=orange,author=Iman,#1]{#2}}
\newcommandx{\nm}[2][1=]{\todo[linecolor=orange,backgroundcolor=red!25,bordercolor=red,author=Nader,#1]{#2}}
\newtheorem{theorem}{\textbf{Theorem}}
\newtheorem{proposition}{\textbf{Proposition}}
\newtheorem{assumption}{\textbf{Assumption}}
\newtheorem{definition}{\textbf{Definition}}
\newtheorem{remark}{\textbf{Remark}}
\begin{document}
	\title{Cyber Attack and Machine Induced Fault Detection and Isolation Methodologies for  Cyber-Physical Systems}
	\author{Mahdi Taheri, Khashayar Khorasani, Iman Shames, and Nader Meskin
		\thanks{The authors would like to acknowledge the financial support received from NATO under the Emerging Security Challenges Division program. K. Khorasani and N. Meskin would like to acknowledge the support received from NPRP grant number 10-0105-17017 from the Qatar National Research Fund (a member of Qatar Foundation). K. Khorasani would also like to acknowledge the support received from the Natural Sciences and Engineering Research Council of Canada (NSERC) and the Department of National Defence (DND) under the Discovery Grant and DND Supplemental Programs. The statements made herein are solely the responsibility of the authors.}
		\thanks{Mahdi Taheri (m$\_$eri@encs.concordia.ca) and Khashayar Khorasani (kash@ece.concordia.ca) are with the Department of Electrical and Computer Engineering, Concordia University, Montreal, Canada.}%
		\thanks{Iman Shames (iman.shames@unimelb.edu.au) is with the Department of Electrical and Electronic Engineering, University of Melbourne, Melbourne, Australia.}%
		\thanks{ Nader Meskin (nader.meskin@qu.edu.qa) is with the Department of Electrical Engineering, Qatar University, Doha, Qatar.}%
	}
	\maketitle
	
	\begin{abstract}
		In this paper, the problem of simultaneous cyber attack and fault detection and isolation (CAFDI) in cyber-physical systems (CPS) is studied. The proposed solution methodology consists of two filters on the plant and the command and control (C\&C) sides of the CPS and an unknown input observer (UIO) based detector on the plant side. Conditions under which the proposed methodology can detect deception attacks, such as covert attacks, zero dynamics attacks, and replay attacks are characterized. An advantage of the proposed methodology is that one does not require a fully secured communication link which implies that the communication link can be compromised by the adversary while it is used to transmit the C\&C side observer estimates. Also, it is assumed that adversaries have access to parameters of the system, filters, and the UIO-based detector, however, they do not have access to all the communication link channels. Conditions under which, using the communication link cyber attacks, the adversary cannot eliminate the impact of actuator and sensor cyber attacks are investigated. To illustrate the capabilities and effectiveness of the proposed CAFDI methodologies,  simulation case studies are provided and comparisons with detection methods that are available in the literature are included to demonstrate the advantages and benefits of our proposed solutions.
	\end{abstract}
	

	\section{Introduction}
	Cyber-physical systems (CPS) are monitored and controlled by distributed sensors, actuators, and embedded computers that are connected via communication networks \cite{ascf}. Our today's life massively depends on CPS due to their wide range of applications in different areas, such as power systems and smart grid, next generation aerospace and transportation systems, and process control and water treatment networks \cite{adaii}. Through employing CPS for these applications provide us with unique capabilities to accomplish high level performance and reliability  performing complex tasks \cite{DTA}.

	Anomalies and machine induced faults as well as malicious cyber attacks in physical components of CPS do occur and are observed in actuators and sensors. In recent years, cyber security challenges in CPS, that include cyber attacks on communication networks have attracted significant interest \cite{7581101,5394956,6760152,adaii,DTA,asorr}. Nevertheless, the problem of \textit{simultaneous} diagnosis of  cyber attacks and faults  has not been fully addressed in the literature.

A special type of cyber attack is defined as the deception attack in which an adversary changes the transmitted information of the system's input or output  by compromising the CPS network communication channels. This paper studies the cyber attack and fault detection and isolation (CAFDI) problem of CPS in  presence of machine induced faults as well as malicious deception cyber attacks, such as covert attacks, zero dynamics attacks, and replay attacks. Covert attacks and zero dynamics attacks are defined as undetectable attacks \cite{safcpsuua,siolcps,baniamerian2019determination}, since they have no impact on the received output measurements on the command and control (C\&C) side of the CPS.

	A number of researchers have attempted to directly apply fault detection methods to detect cyber attacks, however, there is an inherent difference between machine induced faults and cyber attacks anomalies. Faults represent structural physical  anomalies in the system, whereas cyber attacks are injected intentionally by an intelligent adversary with the purpose of damaging the nominal behavior of the system. Standard fault detection algorithms, such as unknown input observer (UIO) \cite{NCSU}, have been used as tools to detect cyber attacks. There is an inherent differences between faults and cyber attacks, where faults follow and are governed by laws of physics and are associated with physical system properties. On the other  hand,  cyber attacks  are intelligently designed and do not necessarily follow physical system degradations. Consequently, conventional fault diagnosis algorithms should be fundamentally generalized to accommodate the malicious intelligent adversary cyber attacks threats.

	As a brief overview, the  geometric-based fault detection methodologies  were proposed in \cite{mmit,gsot} to obtain necessary and sufficient conditions for existence of observers that can be used to generate a residual signal for the purpose of fault detection and isolation (FDI). In addition to geometric approaches, many algebraic model-based FDI methods have been introduced, such as UIO \cite{douioarfdf,sfdvro}, interacting multiple model \cite{1507309}, multiple model \cite{sfdiaiummbhkffgt,ammbaffdoje}, distributed detection algorithms \cite{MESKIN20092032,DAVOODI2016185,dfdfi}, and parity equation based approaches \cite{fdaiupr,aropsatfd}.

	For the cyber attack detection problem, a periodic modulation scheme with the idea of changing the behavior of the control input was proposed in \cite{docaazdaicps} to detect covert and zero dynamics attacks in CPS. However, by using this method a fault in the system can misleadingly be detected as a cyber attack. A method to detect covert attacks in a network of interconnected subsystems using the received information from subsystems was introduced in \cite{barboni2019distributed}. However, it was assumed that the communication links among the subsystems are \textit{fully secured}, which is not always feasible in real-world systems.

	In \cite{rsaic}  geometric theory was used to define zero dynamics attacks and show their impact on the system, and proposed to add perturbations to the system matrices of the system $(A,B,C)$ to change the zero dynamics of the system so that the adversary can no longer excite these new zero dynamics modes. However, in a zero dynamics cyber attack, the adversary has a complete knowledge of the system, therefore, after changing the characteristics of the system one would still be able to discover the new matrices and dynamics.

	In \cite{csfsc}, a sensor coding method was proposed that reveals stealthy false data injection attacks by changing the direction of cyber attacks where an algorithm to compute the coding matrices was designed, and finally, a time-varying coding approach was developed for the case when the adversary is capable of estimating a static coding matrix. As a drawback of this approach, it should be noted that one is also not capable of isolating faults and cyber attack signals and anomalies.

	The authors in \cite{diaocsuata} developed a moving target approach in which certain time-varying external dynamics are added to the system. Leveraging the moving target approach, the extended dynamics of the system become unknown to adversaries and they no longer are capable of executing covert attacks and replay attacks. However, zero dynamics attacks cannot be detected by using the moving target approach. In \cite{docaocpsbe}, the system was augmented by adding switching auxiliary dynamics that are unknown to the adversary and a switched Luenberger observer was designed to detect covert and zero dynamics attacks, however, for implementation purposes the extended system and the switched observer need to be synchronized.

	Due to stealthiness of covert and zero dynamics attacks, it is of paramount importance to develop methods that can be used to detect and isolate them. In addition, due to existence of physical component faults in CPS, one needs to also clearly detect and isolate both faults and cyber attacks in these systems. This paper aims at addressing the problem of CAFDI in CPS.

In our proposed methodology, two filters are designed on both the plant side and the C\&C side of the CPS that are interconnected  via communication links that can be compromised by the adversary. Moreover, on the plant side UIO-based detectors are designed to generate residuals for detecting and isolating actuator cyber attacks, sensor cyber attacks, as well as actuator faults, and sensor faults while the adversary have a \underline{complete knowledge} of the filters and  UIO-based detectors. Any type of detectable and undetectable cyber attacks can be detected by using our proposed methodology, however, we have assumed that the adversary does not have access to all the communication  channels among the filters. 

By utilizing both the filters and  detectors, we propose and derive conditions under which an adversary that performs cyber attack on the communication link channels cannot eliminate the impacts of actuator and sensor attacks.

	To summarize, the main contributions of this paper are stated as follows:
	
	\begin{enumerate}
		\item A distributed filter design methodology based on observing the system from both the plant side and the C\&C side is introduced and developed that can be utilized to detect and isolate both cyber attacks and machine induced faults.
		
		\item By utilizing our proposed methodology, undetectable cyber attacks such as covert attacks and zero dynamics attacks, as well as detectable attacks such as replay attacks can be detected and isolated.
		
		\item Based on both the plant side and the C\&C side estimation and observation methodology, conditions under which isolation among actuator cyber attacks and sensor cyber attacks are provided and developed.
	\end{enumerate}
	
	The remainder of the paper is organized as follows. A mathematical model of the system that takes into account faults and cyber attacks, the definition of undetectable attacks, and the main objective of this paper are provided in Section \ref{s:models}. In Section \ref{s:ObserverD}, our proposed CAFDI methodology that consists of two side filters, the UIO-based detector and residual signals are developed and investigated. Design conditions for the filters and  detector are proposed and developed. To illustrate and demonstrate the capabilities of our analytical results,  numerical simulation case studies are presented in Section \ref{s:simo}. Conclusions are provided in Section \ref{s:conclu}.

	\section{Problem Statement and Formulation}\label{s:models}
	
	\subsection{The Cyber-Physical System (CPS) Model}
	
	In this paper, a strictly proper linear time-invariant (LTI) CPS of the form given below is studied:
	
	\begin{align}\label{e1}
	\dot{x}^\text{s} (t)=&A^\text{s} x^\text{s} (t)+B^\text{s} u^*(t)+L_1f_1(t) + N^\text{s}  \omega^\text{s}  (t),\nonumber\\
	y_\text{p}(t)=& C^\text{s} x^\text{s} (t)+L_2f_2^\text{s} (t) + \nu^\text{s}  (t),
	\end{align}
	
	\noindent where $x^\text{s} (t)\in \mathbb{R}^n$ represents the state, $y_\text{p}(t)\in \mathbb{R}^p$ denotes the measured output on the plant side, $u^*(t)\in \mathbb{R}^m$ denotes the control input, $f_1(t)\in \mathbb{R}^{m_\text{f}}$ and $f_2^\text{s} (t)\in \mathbb{R}^{p_\text{f}}$ correspond to actuator and sensor faults, respectively. Moreover, $\omega^\text{s}  (t) \in \mathbb{R}^m$ and $\nu^\text{s}  (t) \in \mathbb{R}^p$ denote zero mean wide-sense stationary (WSS) random Gaussian processes that represent process and measurement noise with the covariance matrices $Q$ and $R$, respectively. The quadruple   $(A^\text{s} ,\,C^\text{s} ,\, B^\text{s} , \, N^\text{s} )$ has appropriate dimensions and describe the CPS characteristics, and the known pair $(L_1,\,L_2)$ capture the fault signatures.

	In case of injection of a cyber attack on actuators, the control input is expressed and changed to
	\begin{equation}\label{e:c_i_a}
	u^*(t)=u(t)+S_{\text{a}}a_{\text{u}}(t),
	\end{equation}
	where $u(t) \in \mathbb{R}^m$ represents the control command which is the output of the C\&C, $a_{\text{u}} (t) \in \mathbb{R}^{m_\text{a}}$ denotes a vector describing the effects of unknown cyber attacks on actuators, and $S_{\text{a}}$ is a matrix of appropriate dimension which indicates the control input channels that are under attack.

	The output of the CPS on the C\&C side when sensors are under cyber attack can be expressed as
	\begin{equation}\label{e:o_a}
	y^*(t)=C^\text{s} x^\text{s} (t)+L_2 f_2^\text{s} (t)+D_{\text{a}}a_\text{y}(t)+\nu^\text{s} (t),
	\end{equation}
	where $y^*(t) \in \mathbb{R}^p$ denotes the output, $a_y(t) \in \mathbb{R}^{p_\text{a}}$ denotes the attack signal, and the known matrix $D_{a}$ describes the sensor attack signature. A CPS in  presence of both the actuator and sensor cyber attacks is depicted in Fig.~\ref{fig:cps}.

	Equations \eqref{e1} and \eqref{e:c_i_a} provide a state space realization of the CPS from the C\&C side in the following form:
	\begin{align}
	\dot{x}^\text{s}(t) =& A^\text{s} x^\text{s}(t)+B^\text{s} u(t)+B_{\text{a}}^\text{s} a_\text{u}(t)+L_{1}f_{1}(t) + N^\text{s} \omega^\text{s} (t),\label{e:e_s_p}
	\end{align}
	where $B_{\text{a}}^\text{s}=B^\text{s} S_{\text{a}}$ is to be interpreted as the actuator cyber attack signature.

	In \eqref{e:c_i_a} and \eqref{e:o_a}, $a_\text{u} (t)$ and $a_\text{y} (t)$ denote the impacts of the adversary's attack on the control input and output of the CPS, respectively. The signals $a_\text{u} (t)$ and $a_\text{y} (t)$ can be arbitrarily changed by the malicious adversary. In  presence of $a_\text{u} (t)$ and $a_\text{y} (t)$, the adversary intends to inflict maximum possible damage on the components of the system while simultaneously remaining undetected. 	The following definitions are needed in the remainder of the paper.

	\begin{figure}[!t]
		\centering
		\centerline{\includegraphics[width=\columnwidth]{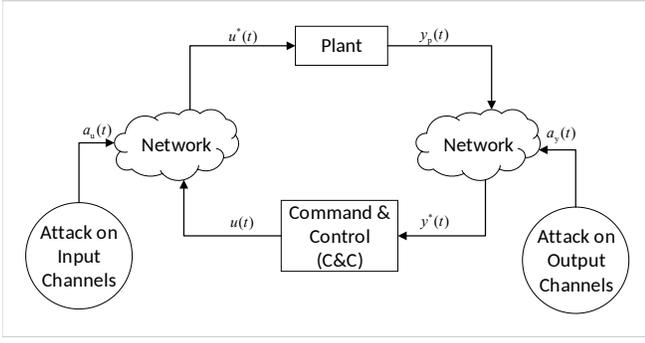}}
		\caption{Cyber-physical system under deception attack on both input and output channels, where $u(t)$ denotes the control command, $a_\text{u} (t)$ represents the cyber attack signal on the input channel, $u^*(t)$ represents the control input of the plant, $y_\text{p}(t)$ denotes the output on the plant side, $a_\text{y} (t)$ denotes the attack signal on the output channel, and $y^*(t)$ denotes the output on the C\&C side.}\label{fig:cps}
	\end{figure}

	\begin{definition}[Weakly Unobservable Subspace \cite{ctfls}] \label{def:weaklyObs}
		Let us denote the CPS by $\Sigma=(A^\text{s} ,B^\text{s} ,B_a^\text{s} ,L_1,N^\text{s} ,C^\text{s} ,L_2, D_\text{a})$. Under the fault free scenario $f_1(t)=0$ and $f_2^\text{s} (t)=0$, the noise free scenario $\omega^\text{s}  (t) =0$ and $\nu^\text{s}  (t) =0$, and the cyber attack free scenario $a_\text{u}(t)=0$ and $a_\text{y}(t)=0$, a point $x^\text{s} (0)=x_0^\text{s}  \in \mathbb{R}^n$ is called weakly unobservable if there exists an input function $u(t)$ such that the output satisfies $y^*(t)=0$, $\forall \, t \geq 0$. The set of all weakly unobservable points is called weakly unobservable subspace and is denoted by $\mathscr{V}(\Sigma)$.
	\end{definition}
	
Let us denote $X^\text{s} (x^\text{s} (0),u(t),a_\text{u}(t),a_\text{y}(t))$ as the solution to \eqref{e:e_s_p} under the fault free condition, and $Y(x^\text{s} (0),u(t),a_\text{u}(t),a_\text{y}(t))=C^\text{s}  X^\text{s} (x^\text{s} (0),u(t),a_\text{u}(t),a_\text{y}(t))$ as the corresponding output of the CPS, $\forall \, t \geq 0$.
	\begin{definition} [Undetectable Cyber Attacks \cite{siolcps}] \label{def:undetectable}
		Given $x^\text{s} (0) =x_0^\text{s}$, in the CPS (\ref{e:e_s_p}) under the fault free scenario, the cyber attack on actuators and sensors using $a_\text{u} (t) \neq 0$ and $a_\text{y} (t)$, is designated as undetectable if $Y(x_0^\text{s} ,u(t),a_\text{u}(t),a_\text{y}(t))=Y(x_0^\text{s},u(t),0,0)$, $\forall t \geq 0$.
	\end{definition}

	In the same manner as described in \cite{gsot, massoumnia1989failure}, the sensor fault and sensor noise can be represented by pseudo actuator fault and pseudo process noise, respectively. It is worth noting that in this representation, as described below, sensor faults are mapped into and represented by pseudo actuator faults. 

Towards the above end, the following auxiliary invertible LTI system that is driven by the appropriate $f_2 (t)$, which represents the pseudo actuator fault, and $\omega^\text{a} (t)$, which captures the pseudo process noise, is expressed as:
	\begin{equation}\label{e:psedu}
	\begin{split}
	\dot{x}^\text{a} (t) &=A^\text{a} x^\text{a} (t)+L_2^\text{a} f_2 (t)+N^\text{a} \omega^\text{a} (t), \\
	C^\text{a} x^\text{a} (t) &= L_{2}f_2^\text{s} (t) +\nu^\text{s} (t),
	\end{split}
	\end{equation}
	where $x^\text{a} (t) \in \mathbb{R}^{p_\text{f}+{p}}$, $f_2 (t) \in \mathbb{R}^{p_\text{f}}$, and $\omega^\text{a} (t) \in \mathbb{R}^{p}$. By incorporating the dynamics of \eqref{e:e_s_p} and \eqref{e:psedu}, one can obtain the augmented and extended CPS in the following form:
	\begin{align}
	\dot{x}(t) =& A x(t)+Bu(t)+B_{\text{a}}a_\text{u}(t)+F_{1}f_{1}(t)+F_2 f_2 (t) \nonumber \\
	& + N \omega (t),  \nonumber \\
	y^*(t) =& C x(t)+D_{\text{a}}a_\text{y}(t),\label{e:e_s}
	\end{align}
	where $x(t)=[x^\text{s} (t)^\top , \, x^\text{a} (t)^\top]^\top$, $A=\text{diag}(A^\text{s},A^\text{a})$, $B=[{B^\text{s}}^\top ,\, 0_{m \times (p_\text{f}+p)}]^\top$, $B_\text{a}=[{B_\text{a}^\text{s}}^\top ,\, 0_{m_\text{a} \times (p_\text{f}+p)}]^\top$, $F_1=[{L_1}^\top ,\, 0_{m_\text{f} \times (p_\text{f}+p)}]^\top$, $F_2=[0_{p_\text{f} \times n} ,\, {L_2^\text{a}}^\top]^\top$, $N=\text{diag}(N^\text{s}, N^\text{a})$, $\omega (t)=[\omega^\text{s} (t)^\top ,\, \omega^\text{a} (t)^\top]^\top$, and $C=[C^\text{s}, \, C^\text{a}]$. It should be noted that the defined output  $y^*(t)$ in \eqref{e:o_a} is equal to the one that is given by \eqref{e:e_s}, however, the representations are different.
	
	\subsection{Objectives}\label{s:pf}
	Our main objective in this paper is to address the simultaneous cyber attack and fault detection and isolation (CAFDI) problem for the CPS \eqref{e:e_s} by designing a bank of observers such that each set of residual signals corresponding to  observers is sensitive and specified to detect one specific type of anomaly, namely either an actuator cyber attack $a_\text{u}(t)$, a sensor cyber attack $a_\text{y}(t)$, an actuator fault $f_1(t)$, and/or a pseudo actuator fault $f_2(t)$, while each residual is decoupled from all the other anomalies. 

Decoupling the residuals from one another implies that  occurrence of anomalies only affect those residual signals that are designated to them. We also \underline{do not} limit our focus to detecting only detectable attacks, such as replay attacks. Our goal and objective is  to further detect the so-called undetectable cyber attacks in sense of Definition \ref{def:undetectable}, namely cyber attacks such as covert and zero dynamics. To accomplish our objectives we assume that the adversary cannot compromise all the communication  channels among the proposed plant side and  C\&C side filters, although they have a complete knowledge of  parameters of the filters and detectors.

	\section{Proposed Methodology}\label{s:ObserverD}
	\begin{figure}[!t]
		\centering
		\centerline{\includegraphics[width=\columnwidth]{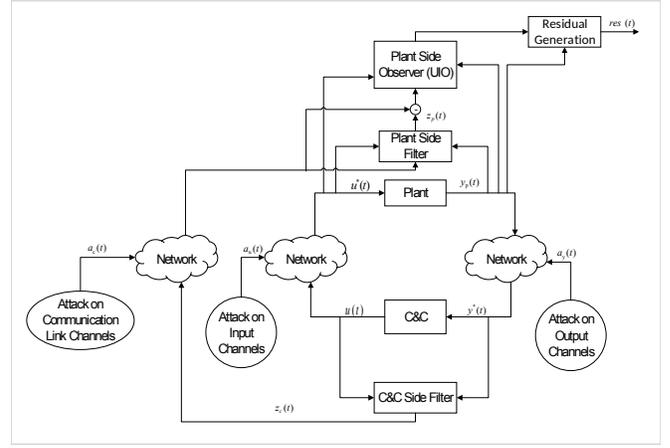}}
		\caption{Observers/filters on both the plant side and the C\&C side of the CPS, where $z_\text{c}(t)$ represents the states of the C\&C side filter, $z_\text{p}(t)$ denotes the states of the plant side filter, $a_\text{c}(t)$ denotes the cyber attack on the communication link channels, and $res(t)$ denotes the residual signals that are generated on the plant side.}\label{fig:distributed}
	\end{figure}
	
	The presence of network layer in CPS has enabled malicious adversaries to perform cyber attacks on the entire system. On the other hand, due to existence of this network layer, it is possible to observe the CPS from both the plant side and its C\&C side. The idea of observing the CPS from both the plant side and the C\&C side is illustrated in Fig.~\ref{fig:distributed}. Our goal in this framework is to utilize information from the designed filters on both sides via a communication link and generate residuals that are specifically sensitive to faults and cyber attacks. Using these residuals, the isolation between faults or cyber attacks can also be achieved.
	
	Two filters having the same characteristics on both sides are designed in Subsections \ref{sub:command_filter} and \ref{sub:pside_filter}. By using the communication link, states of the C\&C side filter are transmitted to the plant side to generate residual signal that is sensitive to only cyber attacks while this communication link may still be compromised by an adversary.

	A detector on the plant side that utilizes an unknown input observer (UIO) is designed in the Subsection \ref{sub_s:UIO}. The detector utilizes the previously generated residuals as  additional input so that they are sensitive to both cyber attacks and faults. The reason for selecting UIO as the main detector is that it enables one to utilize a general design structure to simultaneously address the considered CAFDI problems.

Other algebraic-based observer design techniques, such as eigenstructure assignment and Kalman filters have certain limitations such as not having a flexible structure and requiring high computational cost. For instance, to isolate different types of cyber attacks and faults using Kalman filters, one needs to design and associate a large number of multiple models of Kalman filters on both sides of the CPS, which is computationally excessive and increases the risks and vulnerabilities exploited by intelligent malicious adversaries  to inject cyber attacks.

	Our proposed methodology is presented in the Subsection \ref{sub_s:Pside}. It is worth  noting that by utilizing the proposed methodology, one is still capable of detecting any kind of stealthy cyber attacks on the system, such as covert attacks and zero dynamics attacks.

	\subsection{Command \& Control side filter}\label{sub:command_filter}
	From the C\&C side and according to \eqref{e:e_s}, the output of the CPS is governed by
	\begin{equation}\label{cpovy}
	y^*(t)= Cx(t)+D_{\text{a}}a_\text{y}(t).
	\end{equation}
	We have the following standing assumption to be considered  throughout this paper.
	\begin{assumption}
		Only the communication channels can be compromised and attacked. Consequently, on the C\&C side one has access to the control signal, $u(t)$, before its manipulation by the adversary.
	\end{assumption}
	
	The proposed filter on the C\&C side can be expressed as follows:
	\begin{eqnarray}\label{e:f_c}
	\dot{z}_\text{c}^\ell(t)=F_\text{p}^\ell z_\text{c}^\ell(t) +T_\text{p} B u(t)+K_\text{p}^\ell y^*(t),
	\end{eqnarray}
	where $z_\text{c}^\ell(t) \in \mathbb{R}^{n}$ represents the filter state that estimates $x^\text{s}(t)$ from the C\&C side, and the matrices $F_\text{p}^\ell, \, T_\text{p}^\ell,$ and $K_\text{p}^\ell$ are of appropriate dimensions that are designed and selected subsequently. The index $\ell \in \{\text{SA},\text{AA},\text{SF},\text{AF}\}$, designates if the filter is designed for detecting sensor attacks, actuator attacks, sensor faults, and actuator faults, respectively.

	\subsection{Plant side filter}\label{sub:pside_filter}
	On the plant side, sensor measurements are carried out before sensor attacks, and the output of  CPS can be expressed as follows:
	\begin{equation}\nonumber
	y_\text{p}(t)=Cx(t).
	\end{equation}
	Moreover, on this side one has access to the potentially manipulated control signal $u^*(t)=u(t)+S_\text{a} a_\text{u}(t)$. 

The proposed filter on the plant side is expressed in the following form:
	\begin{equation}\label{e:f_p}
	\begin{split}
	\dot{z}_\text{p}^\ell(t) =& F_\text{p}^\ell z_\text{p}^\ell(t) +T_\text{p}^\ell B u^*(t)+K_\text{p}^\ell y_\text{p}(t)+L_\text{p}^\ell(z_\text{p}^\ell(t) \\
	&-(z_\text{c}^\ell(t)+D_{\text{ac}} a_\text{c}(t))),
	\end{split}
	\end{equation}
	where $z_\text{p}^\ell(t) \in \mathbb{R}^{n}$ denotes the filter state estimating $x^\text{s}(t)$ from the plant side, $a_{\text{c}}(t) \in \mathbb{R}^{n_c}$ denotes the cyber attack on the communication link between the two filters with the signature $D_{\text{ac}}$. Similar to the C\&C side filters, the index $\ell \in \{\text{SA},\text{AA},\text{SF},\text{AF}$\}, indicates if the filter is designed for detecting sensor attacks, actuator attacks, sensor faults, and actuator faults, respectively.
	
	The error signals between  estimated states for both sides can be defined as $e_\text{p}^\ell(t) = z_\text{p}^\ell(t) -z_\text{c}^\ell(t)$. The state-space representation of the error dynamics between the two filter states can be derived as follows:
	\begin{equation}\label{e:e_p}
	\begin{split}
	\dot{e}_\text{p}^\ell(t) = & (F_\text{p}^\ell+L_\text{p}^\ell) e_\text{p}^\ell(t)+T_\text{p}^\ell B_\text{a} a_\text{u}(t) - K_\text{p}^\ell D_\text{a} a_\text{y}(t) \\
	& - L_\text{p}^\ell D_{\text{ac}} a_\text{c} (t) .
	\end{split}
	\end{equation}
	It follows from (\ref{e:e_p}) that the error dynamics  is only sensitive to cyber attacks.

	\subsection{UIO-based detector and residual signal generation}\label{sub_s:UIO}
	Consider a UIO-based detector on the plant side having the following representation:
	\begin{equation}\label{e:UIO}
	\begin{split}
	\dot{z}^\ell(t) =& F^\ell z^\ell(t)+T^\ell B u^*(t)+K^\ell y_\text{p}(t)+L^\ell(z_\text{p}^\ell(t) \\
	& -(z_\text{c}^\ell (t) +D_{\text{ac}} a_\text{c}(t))), \\
	\hat{x}^\ell(t) =& z(t)^\ell+H^\ell y_\text{p}(t),
	\end{split}
	\end{equation}
	\noindent where $z^\ell(t) \in \mathbb{R}^{(n+p_\text{f}+p)}$, and $\hat{x}(t) \in \mathbb{R}^{(n+p_\text{f}+p)}$ denotes the estimated states by the detector. The matrices $F^\ell, \, T^\ell,\, K^\ell, \, L^\ell,$ and $H^\ell$ are of appropriate dimensions and will be specified
 subsequently, with $\ell \in \{\text{SA},\text{AA},\text{SF},\text{AF}$\}, denoting the  categories defined previously.

	The error between the states of the detector and the CPS is defined as $e^\ell(t)=x(t)-\hat{x}^\ell(t)$. Let
	\begin{equation}\label{e:res}
	res_\ell(t)=y_\text{p}(t)- C\hat{x}^\ell(t)=C e^\ell(t),
	\end{equation}
	denote a residual signal. By selecting $K^\ell=K_1^\ell+K_2^\ell$, $F^\ell=A-H^\ell CA-K_1^\ell C$, $K_1^\ell$ of appropriate dimension, and $K_2^\ell=FH^\ell$, the dynamics associated with $e^\ell(t)$ can now be expressed in the following form:
	\begin{equation}\label{e:e}
	\begin{split}
	\dot{e}^\ell(t) =& (A-H^\ell CA-K_1^\ell C)e^\ell (t)+(I-T^\ell-H^\ell C)(Bu(t) \\
	& +B_\text{a} a_\text{u}(t))+(I-H^\ell C)F_1 f_1(t) +(I-H^\ell C)F_2 f_2(t) \\
	& +(I-H^\ell C)N \omega (t) -L^\ell e_\text{p}^\ell(t)-L^\ell D_{\text{ac}} a_\text{c}(t).
	\end{split}
	\end{equation}

	\begin{definition}\label{def:res}
		A cyber attack/fault is detected if the residual signal $res_\ell(t)$ given by \eqref{e:res} exceeds a pre-specified threshold $\eta>0$ as follows:
		$$\|res_\ell(t)\|_2 > \eta.$$
		where $\|.\|_2$ indicates the Euclidean norm.
	\end{definition}
	\begin{remark}\label{rem:eta}
		To select the threshold $\eta$, one may need to perform Monte Carlo simulation runs for the healthy system, i.e., for the fault free and cyber attack free system in presence of external disturbances and noise and choose the maximum value of $\|res(t)_\ell\|_2$ as $\eta$.
	\end{remark}
	\begin{definition}[Decoupled Residual] \label{def:decouple}
		The residual signal $res_\ell(t)$ given by \eqref{e:res} is decoupled from  an anomalous  signal in the set $\{a_\text{u}(t), a_\text{y}(t), f_1(t), f_2(t)\} $ if the dynamics and trajectory of $res_\ell(t)$ is not affected by that anomalous signal.
	\end{definition}

	\subsection{Filters and detector design for cyber attack and fault detection and isolation objectives}\label{sub_s:Outerside}\label{sub_s:Pside}
	The error dynamics in  \eqref{e:e_p} and \eqref{e:e} can now be augmented as follows:
	\begin{equation}\label{e:e_trans}
	\begin{split}
	\dot{\check{e}}^\ell(t) =& \check{F}^\ell \check{e}^\ell(t) +\check{B}^\ell u(t) + \check{B}_\text{a}^\ell a_\text{u}(t) + \check{F}_1^\ell f_1(t) + \check{F}_2^\ell f_2(t) \\
	& - \check{K}_\text{p}^\ell a_\text{y}(t) - \check{L}^\ell a_\text{c}(t) +\check{N}^\ell \omega (t),
	\end{split}
	\end{equation}
	where $\check{e}^\ell(t) =[{e^\ell(t)}^\top \, {e_\text{p}^\ell(t)}^\top]^\top$, and
	\begin{equation}\label{e:trans}
	\begin{split}
	\check{F}^\ell  &= \begin{bmatrix}
	F^\ell & -L^\ell \\
	0 & F_\text{p}^\ell + L_\text{p}^\ell
	\end{bmatrix}, \, \check{B}= \begin{bmatrix}
	(I-T^\ell -H^\ell C)B \\
	0
	\end{bmatrix}, \\
	\check{B}_\text{a}^\ell & = \begin{bmatrix}
	(I-T^\ell-H^\ell C)B_\text{a} \\
	T_\text{p}^\ell B_\text{a}
	\end{bmatrix} , \, \check{F}_1^\ell = \begin{bmatrix}
	(I-H^\ell C)F_1 \\
	0
	\end{bmatrix} , \\ \check{F}_2^\ell & = \begin{bmatrix}
	(I-H^\ell C) F_2 \\
	0
	\end{bmatrix} , \,
	\check{K}_\text{p}^\ell = \begin{bmatrix}
	0 \\
	K_\text{p}^\ell D_\text{a}
	\end{bmatrix} , \,
	\check{L}^\ell = \begin{bmatrix}
	L^\ell D_{\text{ac}} \\
	L_\text{p}^\ell D_{\text{ac}}
	\end{bmatrix} , \\
	\check{N}^\ell &= \begin{bmatrix}
	(I-H^\ell C)N \\
	0
	\end{bmatrix},
	\end{split}
	\end{equation}
	where $\ell \in \{\text{SA},\text{AA},\text{SF},\text{AF}\}$.
	\begin{assumption}\label{assume:known}
		The malicious adversary is aware of the parameters of filters in (\ref{e:f_c}), (\ref{e:f_p}), and the UIO-based detector in (\ref{e:UIO}).
	\end{assumption}
	\begin{assumption}\label{assume:Dac}
		The malicious  adversary does not have access to \underline{all} the communication  channels between the two side filters, i.e., $\text{rank}(D_\text{ac}) < n$.
	\end{assumption}

	In the following, it is shown that how one can generate four residual signals $res_\text{AA}(t)$, $res_\text{SA}(t)$, $res_\text{AF}(t)$, and $res_\text{SA}(t)$ to detect the actuator cyber attack, the sensor cyber attack, the actuator fault, and the sensor fault, respectively, by using a bank of filters and four UIO-based detectors.

	\begin{proposition}\label{theorm:a_a}
		Under Assumption \ref{assume:Dac}, the residual signal $res_\text{AA}(t)=y_\text{p}(t)- C\hat{x}^\text{AA}(t)$ is affected by the \underline{actuator cyber attack} $a_\text{u}(t)$ and is decoupled from $a_\text{y}(t)$, $f_1(t)$,  and $f_2(t)$ in the sense of Definition \ref{def:decouple}, if the following conditions for the augmented dynamics \eqref{e:e_trans} hold for $\ell=\text{AA}$, namely:
		\begin{enumerate}
			\item $T^\ell = I-H^\ell C$;
			\item $(I-H^\ell C)F_1 = 0$;
			\item $(I-H^\ell C)F_2 = 0$;
			\item $L^\ell D_{\text{ac}} = 0$;			\item $L_\text{p}^\ell D_{\text{ac}} = 0$;
			\item $K_\text{p}^\text{AA} D_\text{a} = 0$;
			\item the triplet $(C, \, F^\ell, \, L^\ell)$ is left-invertible;
			\item the Rosenbrock system matrix
			\begin{equation*}
			P_{\Sigma_\text{u}}(s) =	\begin{bmatrix}
			sI-(F_\text{p}^\text{AA}+L_\text{p}^\text{AA}) & -T_\text{p}^\text{AA} B_\text{a} \\
			L^\text{AA} & 0_{(n+p_\text{f}+p) \times m_\text{a}}
			\end{bmatrix},
			\end{equation*}
			does not have any non-minimum phase zero dynamics;
			\item $	\text{rank}\, (L^\text{AA} T_\text{p}^\text{AA} B_\text{a}) = \text{rank}\, (T_\text{p}^\text{AA} B_\text{a})$;
			\item $\check{F}^\ell$ is Hurwitz.		
		\end{enumerate}
	\end{proposition}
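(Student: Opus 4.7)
The plan is to substitute the algebraic conditions (1)--(6) into the augmented error dynamics \eqref{e:e_trans} so that every exogenous input except the actuator attack $a_\text{u}(t)$ (and the unavoidable process noise $\omega(t)$) drops out, and then use conditions (7)--(10) to argue both stability of the reduced dynamics and non-trivial transmission of $a_\text{u}(t)$ through to the residual $res_\text{AA}(t) = [C,\,0]\,\check{e}^\text{AA}(t)$.

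First I would exploit conditions (1)--(6) block by block using the explicit forms in \eqref{e:trans}. Condition (1), $T^\ell = I - H^\ell C$, zeros the coefficient $I - T^\ell - H^\ell C$, eliminating $\check{B}^\ell u(t)$ entirely and reducing $\check{B}_\text{a}^\ell$ to $[0^\top,\,(T_\text{p}^\ell B_\text{a})^\top]^\top$. Conditions (2) and (3) annihilate $\check{F}_1^\ell$ and $\check{F}_2^\ell$ respectively; condition (6) zeros $\check{K}_\text{p}^\ell$; and conditions (4)--(5) jointly zero both blocks of $\check{L}^\ell$. Equation \eqref{e:e_trans} therefore collapses to
\begin{equation*}
\dot{\check{e}}^\text{AA}(t) = \check{F}^\text{AA}\,\check{e}^\text{AA}(t) + \check{B}_\text{a}^\text{AA} a_\text{u}(t) + \check{N}^\text{AA}\omega(t),
\end{equation*}
and the decoupling of $res_\text{AA}(t)$ from $u(t)$, $a_\text{y}(t)$, $f_1(t)$, $f_2(t)$, and $a_\text{c}(t)$ is immediate in the sense of Definition~\ref{def:decouple}.

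Next, condition (10) makes $\check{F}^\text{AA}$ Hurwitz, so transient contributions decay and the response from $a_\text{u}$ to the residual factors as a cascade: $a_\text{u}$ drives $e_\text{p}^\text{AA}$ through $(F_\text{p}^\text{AA}+L_\text{p}^\text{AA},\,T_\text{p}^\text{AA} B_\text{a},\,L^\text{AA})$, which in turn drives $e^\text{AA}$ through the coupling $-L^\text{AA}$ and dynamics $(F^\text{AA},\,L^\text{AA},\,C)$. Sensitivity to $a_\text{u}$ then follows in three layers: condition (9), $\text{rank}(L^\text{AA} T_\text{p}^\text{AA} B_\text{a}) = \text{rank}(T_\text{p}^\text{AA} B_\text{a})$, guarantees that the feedforward $L^\text{AA}$ does not annihilate the image of $T_\text{p}^\text{AA} B_\text{a}$, so $a_\text{u}$ reaches the interface without rank loss; condition (8), absence of non-minimum phase invariant zeros of $P_{\Sigma_\text{u}}(s)$, rules out cancellations inside the $e_\text{p}^\text{AA}$ subsystem that would suppress the interface signal $L^\text{AA}e_\text{p}^\text{AA}$; and condition (7), left-invertibility of the triple $(C, F^\text{AA}, L^\text{AA})$, forbids any nonzero interface signal from producing $res_\text{AA}(t) \equiv 0$.

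The main obstacle will be composing these three layers into a clean injectivity statement for the map $a_\text{u}(\cdot) \mapsto res_\text{AA}(\cdot)$. I would handle it by writing the Rosenbrock matrix of the full cascade in block-triangular form and invoking the standard fact that its invariant zeros lie in the union of the invariant zeros of the two factors, which by (7)--(9) is empty in the closed right half plane. Combined with Hurwitz stability from (10), this yields a left-invertible, minimum-phase map from $a_\text{u}(t)$ to $res_\text{AA}(t)$, which is precisely the asserted sensitivity and completes the proof.
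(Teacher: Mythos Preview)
Your decoupling argument via conditions (1)--(6) and your cascade decomposition into the $e_\text{p}^\text{AA}$ subsystem feeding $e^\text{AA}$ through $-L^\text{AA}$ match the paper exactly; the paper writes out precisely the same reduced dynamics and the same two-block partition. Where you diverge is in how condition~(9) is deployed. The paper does not treat it merely as a feedthrough rank condition but uses it, via a geometric control-theoretic argument, to establish \emph{left-invertibility} of $P_{\Sigma_\text{u}}(s)$ itself: it identifies left-invertibility with vanishing of the largest controllability subspace $\mathscr{R}^*(\Sigma_\text{u}) = \mathscr{V}(\Sigma_\text{u}) \cap \mathscr{W}^*(\Sigma_\text{u})$ contained in $\ker(L^\text{AA})$, and then argues that $\ker(L^\text{AA}) \cap \operatorname{Im}(T_\text{p}^\text{AA}B_\text{a}) = 0$ (which is exactly condition~(9)) forces $\mathscr{R}^*(\Sigma_\text{u})=0$. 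Thus in the paper, (9) delivers left-invertibility of the inner subsystem and (8) handles the non-minimum-phase zeros separately; (7) then carries the nonzero interface signal through to $res_\text{AA}$. Your route---stacking the cascade's Rosenbrock matrix in block-triangular form and appealing to the union-of-zeros fact---reaches the same endpoint more algebraically and avoids invoking the $\mathscr{V}/\mathscr{W}^*$ machinery, at the cost of being less explicit about why (9) alone suffices for left-invertibility of the inner block. Either route is sound; the paper's buys a cleaner link between the rank condition and the geometric notion of undetectable controllable attacks referenced in Remark~\ref{rem:au_ac}, while yours is more self-contained.
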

	\begin{proof}
		The augmented governing error dynamics associated with $e^\text{AA}(t)$ and $e_p^\text{AA}(t)$ are governed by \eqref{e:e_trans} where $\ell = \text{AA}$.
		Under Conditions 1) to 6), the dynamics \eqref{e:e_trans} become
		\begin{equation}
		\dot{\check{e}}^\text{AA}(t) = \check{F}^\text{AA} \check{e}^\text{AA}(t) + \check{B}_\text{a}^\text{AA} a_\text{u}(t)+\check{N}^\text{AA}\omega (t). \label{res_AA}
		\end{equation}
		Consequently, the error signal ${\check{e}}(t)$ is not affected by the control command $u(t)$, the actuator fault $f_1(t)$, the sensor fault $f_2(t)$, the sensor attack $a_\text{y}(t)$, and the communication link attack signal $a_\text{c}(t)$.
		Furthermore, \eqref{res_AA} can be partitioned into the following two subsystems:
		\begin{align}\label{e:th_e_p}
		\dot{e}_\text{p}^\text{AA}(t)=(F_\text{p}^\text{AA} + L_\text{p}^\text{AA}){e}_\text{p}^\text{AA}(t)+T_\text{p}^\text{AA} B_\text{a} a_\text{u}(t),
		\end{align}
		and
		\begin{align}\label{e:th_e}
		\begin{split}
		\dot{{e}}^\text{AA}(t)=&F e^\text{AA}(t)-L^\text{AA}{e}_\text{p}^\text{AA}(t)+(I-H^\text{AA} C)N\omega(t),\\
		res_\text{AA}(t)=&Ce^\text{AA}(t).
		\end{split}
		\end{align}
		
		Based on Condition 7) and according to \eqref{e:th_e}, the impact of ${e}_\text{p}^\text{AA}(t)$ will appear in $res_\text{AA}(t)$ for any $a_\text{u}(t) \neq 0$.
		
		Consider $e_\text{p}^\text{AA}(t)$ in \eqref{e:th_e_p} with the output $L^\text{AA}e_\text{p}^\text{AA}(t)$ in order to construct the Rosenbrock system matrix $P_{\Sigma_\text{u}}(s)$. To prevent stealthy attacks on the plant side filter, one needs to design this filter and $L^\text{AA}$ such that the Rosenbrock system matrix $P_{\Sigma_\text{u}} (s)$ has no non-minimum phase zero dynamics and is left-invertible \cite{safcpsuua}.
		
		The Rosenbrock system matrix $P_{\Sigma_\text{u}}(s)$ being left-invertible is equivalent to the largest controllability subspace of the system $(L^\text{AA}, F_\text{p}^\text{AA} + L_\text{p}^\text{AA}, T_\text{p}^\text{AA} B_\text{a})$ contained in $\text{ker} (L^\text{AA})$, and designated as $\mathscr{R}^*(\Sigma_\text{u})$ being null \cite{ctfls}. One has (refer to Theorem 8.22 in \cite{ctfls} and Theorem 5.6 in \cite{wonham1974linear})
		\begin{equation}\label{eq:R_largestContSubspace}
		\mathscr{R}^*(\Sigma_\text{u}) = \mathscr{V}(\Sigma_\text{u}) \cap \mathscr{W}^*(\Sigma_\text{u}),
		\end{equation}
		where $\mathscr{V}(\Sigma_\text{u})$ is the weakly unobservable subspace that is equivalent to the largest output-nulling subspace of the triplet $(L^\text{AA}, F_\text{p}^\text{AA} + L_\text{p}^\text{AA}, T_\text{p}^\text{AA} B_\text{a})$, and $\mathscr{W}^*(\Sigma_\text{u})$ is the smallest conditioned invariant subspace containing $\text{Im}(T_\text{p}^\text{AA} B_\text{a})$ \cite{siolcps}.
		
		As described in \cite{ctfls} and \cite{wonham1974linear}, these subspaces can be computed by using the following algorithm
		\begin{IEEEeqnarray}{rCl}\label{e:th5:v}
			\mathscr{V}_0 &=& \text{Ker} (L^\text{AA}), \nonumber \\
			\mathscr{V}_k &=& \mathscr{V}_0 \cap {F_\text{p}^\text{AA}}^{-1} (\mathscr{V}_{k-1} + \text{Im}(T_\text{p}^\text{AA} B_\text{a})),
		\end{IEEEeqnarray}
		and	\begin{IEEEeqnarray}{rCl}\label{e:th5:w}
			\mathscr{W}_0 &=& \text{Im} (T_\text{p}^\text{AA} B_\text{a}), \nonumber \\
			\mathscr{W}_k &=& \mathscr{W}_0 + F_\text{p}^\text{AA} (\mathscr{W}_{k-1} \cap \text{Ker} (L^\text{AA})),
		\end{IEEEeqnarray}
		where  $\mathscr{V}_k$ and $\mathscr{W}_k$ converge to $\mathscr{V}(\Sigma_u)$ and $\mathscr{W}^*(\Sigma_u)$, respectively, in at most $k=n$ steps.
		
		Given \eqref{eq:R_largestContSubspace},  $\mathscr{R}^*(\Sigma_\text{u})=0$, if $\mathscr{V}_0 \cap \mathscr{W}_0 =0$, or equivalently,
		\begin{equation}\label{e:th5:intersectionCondition}
		\text{Ker} (L^\text{AA}) \cap \text{Im} (T_\text{p}^\text{AA} B_\text{a}) =0.
		\end{equation}
		The equation \eqref{e:th5:intersectionCondition} implies that $\text{Im} (T_\text{p}^\text{AA} B_\text{a})$ should not be in the null space of $L^\text{AA}$, which is equivalent to
		$$\text{rank}\, (L^\text{AA}T_\text{p}^\text{AA} B_\text{a}) = \text{rank}\, (T_\text{p}^\text{AA} B_\text{a}).$$
		
		The Rosenbrock system matrix $P_{\Sigma_\text{u}}(s)$ being left-invertible implies that for any $a_\text{u}(t)\neq 0$, $L^\text{AA}e_\text{p}^\text{AA}(t) \neq 0$.
		
		Finally, in order to detect actuator cyber attacks, the governing dynamics in \eqref{res_AA} should be stable. This completes the proof of the Proposition 1.
	\end{proof}

	\begin{remark}
		It should be emphasized that as per Assumption \ref{assume:Dac}, there exists a nonzero $L^\text{AA}$ that satisfies the Condition (4) in the above proposition.
	\end{remark}
	\begin{proposition} \label{theorm:s_a}
		Under Assumption \ref{assume:Dac}, the residual signal $res_\text{SA}(t)=y_\text{p}(t)- C\hat{x}^\text{SA}(t)$ is affected by the \underline{sensor cyber attacks} $a_\text{y}(t)$ and is decoupled from $a_\text{u}(t)$, $f_1(t)$,  and $f_2(t)$ in the sense of Definition \ref{def:decouple},  if Conditions 1)-5), 7), and 10) of the Proposition \ref{theorm:a_a} for $\ell=\text{SA}$, and the following conditions for the augmented error dynamics  \eqref{e:e_trans} hold:
		\begin{enumerate}
			\item $T_\text{p}^\text{SA} B_\text{a} = 0$;
			\item the Rosenbrock system matrix \begin{equation*}
			P_{\Sigma_\text{y}}(s) =	\begin{bmatrix}
			sI-(F_\text{p}^\text{SA}+L_\text{p}^\text{SA}) & K_\text{p}^\text{SA} D_\text{a} \\
			L^\text{SA} & 0_{(n+p_\text{f}+p) \times p_\text{a}}
			\end{bmatrix},
			\end{equation*}
			does not have any non-minimum phase zero dynamics; and
			\item $\text{rank}\, (L^\text{SA} K_\text{p}^\text{SA} D_\text{a}) = \text{rank}\, (K_\text{p}^\text{SA} D_\text{a})$.
		\end{enumerate}
	\end{proposition}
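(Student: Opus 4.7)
The plan is to mirror the structure of the proof of Proposition \ref{theorm:a_a}, but with the roles of actuator-side and sensor-side quantities swapped, so that the residual $res_\text{SA}(t)$ isolates $a_\text{y}(t)$ instead of $a_\text{u}(t)$. First I would instantiate the augmented dynamics \eqref{e:e_trans} with $\ell=\text{SA}$ and kill the unwanted inputs one by one. Conditions 1)--5) of Proposition \ref{theorm:a_a} eliminate, respectively, the contributions of $u(t)$ (via $\check{B}^\text{SA}$ vanishing), $f_1(t)$, $f_2(t)$, and $a_\text{c}(t)$ from both blocks of $\check{e}^\text{SA}(t)$. The new Condition 1) of the present proposition, $T_\text{p}^\text{SA} B_\text{a}=0$, wipes out the actuator attack drive in the lower block of $\check{B}_\text{a}^\text{SA}$; combined with Condition 1) of Proposition \ref{theorm:a_a} this also removes $a_\text{u}(t)$ from the upper block, since there $(I-T^\text{SA}-H^\text{SA}C)B_\text{a}=0$. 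After these simplifications, the augmented dynamics collapse to
\begin{equation*}
\dot{e}_\text{p}^\text{SA}(t)=(F_\text{p}^\text{SA}+L_\text{p}^\text{SA})e_\text{p}^\text{SA}(t)-K_\text{p}^\text{SA} D_\text{a}\,a_\text{y}(t),
\end{equation*}
and
\begin{equation*}
\dot{e}^\text{SA}(t)=F^\text{SA} e^\text{SA}(t)-L^\text{SA} e_\text{p}^\text{SA}(t)+(I-H^\text{SA}C)N\omega(t),
\end{equation*}
with $res_\text{SA}(t)=Ce^\text{SA}(t)$, which manifestly depends only on $a_\text{y}(t)$ (and noise) as required by Definition \ref{def:decouple}.

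Next I would argue that $a_\text{y}(t)\neq 0$ actually reaches $res_\text{SA}(t)$. The propagation path is $a_\text{y}\to e_\text{p}^\text{SA}\to L^\text{SA}e_\text{p}^\text{SA}\to e^\text{SA}\to Ce^\text{SA}$. Condition 7) of Proposition \ref{theorm:a_a} (left-invertibility of the triplet $(C,F^\text{SA},L^\text{SA})$) ensures that any nonzero $L^\text{SA}e_\text{p}^\text{SA}(t)$ produces a nonzero $res_\text{SA}(t)$, exactly as in the proof of Proposition \ref{theorm:a_a}. What remains is to prevent the adversary from designing $a_\text{y}(t)$ that excites $e_\text{p}^\text{SA}(t)$ in the kernel of $L^\text{SA}$, i.e.\ from launching a stealthy attack on the lower subsystem viewed with input $-K_\text{p}^\text{SA}D_\text{a}$ and output $L^\text{SA}$; this is precisely the content of the Rosenbrock matrix $P_{\Sigma_\text{y}}(s)$ of Condition 2) being left-invertible with no non-minimum phase invariant zeros.

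To show left-invertibility of $P_{\Sigma_\text{y}}(s)$, I would repeat the geometric argument used for $P_{\Sigma_\text{u}}$: express the largest controllability subspace $\mathscr{R}^*(\Sigma_\text{y})$ of $(L^\text{SA},F_\text{p}^\text{SA}+L_\text{p}^\text{SA},K_\text{p}^\text{SA}D_\text{a})$ contained in $\ker(L^\text{SA})$ as $\mathscr{V}(\Sigma_\text{y})\cap\mathscr{W}^*(\Sigma_\text{y})$ via Theorem 8.22 of \cite{ctfls}, initialize the iterations \eqref{e:th5:v}--\eqref{e:th5:w} with $\mathscr{V}_0=\ker(L^\text{SA})$ and $\mathscr{W}_0=\mathrm{Im}(K_\text{p}^\text{SA}D_\text{a})$, and conclude $\mathscr{R}^*(\Sigma_\text{y})=0$ from $\mathscr{V}_0\cap\mathscr{W}_0=0$. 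The latter is equivalent to $\mathrm{Im}(K_\text{p}^\text{SA}D_\text{a})\cap\ker(L^\text{SA})=0$, which is precisely the rank equality in Condition 3). Finally, Condition 10) of Proposition \ref{theorm:a_a} gives Hurwitzness of $\check{F}^\text{SA}$, so the error dynamics are stable and the residual remains bounded under noise alone.

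I expect the main obstacle to be a bookkeeping one rather than a conceptual one: carefully checking that Condition 1) of Proposition \ref{theorm:a_a} together with the new Condition 1) here is enough to simultaneously annihilate the actuator-attack term in \emph{both} blocks of $\check{B}_\text{a}^\text{SA}$, and that the rank/geometric step for $P_{\Sigma_\text{y}}(s)$ transports verbatim from the actuator case once $(T_\text{p}^\text{AA}B_\text{a},\,-)$ is replaced by $(-K_\text{p}^\text{SA}D_\text{a},\,+)$. Beyond that, the proof is a direct translation of Proposition \ref{theorm:a_a}'s argument, so no new machinery is required.
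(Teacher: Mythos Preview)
Your proposal is correct and is exactly the argument the paper intends: the paper's own proof simply states that it ``follows along similar lines to that of Proposition~\ref{theorm:a_a} and is omitted for sake of brevity.'' Your write-up faithfully instantiates that template, including the reduction of \eqref{e:e_trans} under Conditions 1)--5) of Proposition~\ref{theorm:a_a} plus $T_\text{p}^\text{SA}B_\text{a}=0$, the left-invertibility of $(C,F^\text{SA},L^\text{SA})$, the geometric $\mathscr{R}^*(\Sigma_\text{y})=\mathscr{V}(\Sigma_\text{y})\cap\mathscr{W}^*(\Sigma_\text{y})$ argument yielding the rank condition, and the Hurwitz requirement on $\check{F}^\text{SA}$.
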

	\begin{proof}
		The proof follows along similar lines to that of Proposition~\ref{theorm:a_a} and is omitted for  sake of brevity.
	\end{proof}
	
	\begin{remark}\label{rem:au_ac}
		Suppose Condition (9) of the Proposition \ref{theorm:a_a} is not satisfied and $P_{\Sigma_\text{u}}(s)$ is not left-invertible. In this case, it has been shown in \cite{safcpsuua} that one can find an actuator cyber attack $a_\text{u}(t) \neq 0$ such that $L^\text{AA}e_\text{p}^\text{AA} (t)=0$. This type of cyber attack has been represented in \cite{safcpsuua} and has been defined as ``undetectable controllable attack" in \cite{siolcps}. According to \eqref{e:th_e_p} and \eqref{e:th_e} the  actuator cyber attack signal $a_\text{u}(t)$ can affect the error $e^\text{AA}(t)$ only through $L^\text{AA}e_\text{p}^\text{AA}(t)$. Hence, the adversary has the capability of injecting a stealthy cyber attack by using $a_\text{u}(t)$ that does not affect the residual signal $res_\text{AA}(t)=C e^\text{AA}(t)$. Similarly, it can be shown that if Condition (3) of Proposition \ref{theorm:s_a} is not satisfied and $P_{\Sigma_\text{y}}(s)$ is not left-invertible, the adversary can inject stealthy attack using $a_\text{y}(t)$ which does not affect the residual $res_\text{SA}(t)$.
	\end{remark}
	
	\begin{remark}\label{rem:4}
		In Propositions  \ref{theorm:a_a} and \ref{theorm:s_a}, there is no assumption on the nature, characteristics, and type of sensor and actuator cyber attacks. This implies that by using the proposed method, one is capable of detecting and isolating detectable attacks, such as replay attacks, as well as undetectable attacks (refer to Definition \ref{def:undetectable}), such as covert attacks and zero dynamics attacks.
	\end{remark}

	\begin{proposition}\label{theorm:a_f}
		Let $\ell= \text{AF}$ . The residual signal $res_\text{AF}(t)=y_\text{p}(t)- C\hat{x}^\text{AF}(t)$ is affected by the \underline{actuator fault} $f_1(t)$ and is decoupled from $a_\text{u}(t)$, $a_\text{y}(t)$,  and $f_2(t)$ in the sense of Definition \ref{def:decouple},  if $L^\text{AF} = 0$ and the following conditions hold:
		\begin{enumerate}
			\item $T^\text{AF} = I-H^\text{AF} C$;
			\item $(I-H^\text{AF} C)F_2 = 0$;
			\item $\check{F}^\text{AF}$ is Hurwitz.		
		\end{enumerate}
	\end{proposition}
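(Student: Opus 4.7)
The plan is to specialize the augmented error dynamics \eqref{e:e_trans} and the block matrices \eqref{e:trans} to $\ell=\text{AF}$ with $L^\text{AF}=0$, and exploit the fact that this choice renders $\check F^\text{AF}$ block diagonal, so the upper subsystem, whose output drives $res_\text{AF}(t)=Ce^\text{AF}(t)$, decouples from every anomaly that can only enter through the lower subsystem. This is the key structural observation; everything else is bookkeeping mirroring the proof of Proposition~\ref{theorm:a_a}.

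First I would substitute $L^\text{AF}=0$ into \eqref{e:trans}. The matrix $\check F^\text{AF}$ becomes $\mathrm{diag}(F^\text{AF},\,F_\text{p}^\text{AF}+L_\text{p}^\text{AF})$, the top block of $\check L^\text{AF}$ vanishes, and $\check K_\text{p}^\text{AF}$ has zero top block already. Extracting the upper equation of \eqref{e:e_trans} therefore yields an $e^\text{AF}(t)$ equation with no dependence on $e_\text{p}^\text{AF}(t)$, on $a_\text{y}(t)$, or on $a_\text{c}(t)$. Condition~1), $T^\text{AF}=I-H^\text{AF}C$, gives $I-T^\text{AF}-H^\text{AF}C=0$ and thereby kills the $u(t)$ and $a_\text{u}(t)$ terms in the top block of \eqref{e:trans}; Condition~2) then kills the $f_2(t)$ term. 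What survives is
\begin{equation*}
\dot e^\text{AF}(t)=F^\text{AF}e^\text{AF}(t)+(I-H^\text{AF}C)F_1 f_1(t)+(I-H^\text{AF}C)N\omega(t),
\end{equation*}
with $res_\text{AF}(t)=Ce^\text{AF}(t)$, which in the sense of Definition~\ref{def:decouple} is decoupled from $a_\text{u}(t)$, $a_\text{y}(t)$, $f_2(t)$ and $a_\text{c}(t)$ while retaining sensitivity to $f_1(t)$ through $(I-H^\text{AF}C)F_1$.

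Finally, Condition~3) ensures both diagonal blocks of $\check F^\text{AF}$ are Hurwitz. Stability of $F^\text{AF}$ keeps $res_\text{AF}(t)$ bounded in the healthy case so that, combined with Remark~\ref{rem:eta}, the threshold $\eta$ of Definition~\ref{def:res} is crossed under a genuine $f_1(t)\neq 0$; stability of $F_\text{p}^\text{AF}+L_\text{p}^\text{AF}$ guarantees that although the lower block $e_\text{p}^\text{AF}(t)$ is still excited by $a_\text{u}(t)$, $a_\text{y}(t)$ and $a_\text{c}(t)$, it stays bounded and, crucially, cannot propagate back into the upper block through the now-zero coupling matrix $L^\text{AF}$.

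No step is technically difficult; the only conceptual point worth emphasising is \emph{why} no Rosenbrock/rank conditions appear here, unlike in Propositions~\ref{theorm:a_a} and~\ref{theorm:s_a}. Setting $L^\text{AF}=0$ simply disconnects the detector from the plant-side filter, so one does not need to insulate the residual against a cleverly crafted communication-link attack that cancels the signature of interest; the price paid is that the AF filter extracts no information from the distributed architecture, which is acceptable because $f_1(t)$ is a physical fault rather than an adversarial signal.
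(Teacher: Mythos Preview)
Your proposal is correct and follows essentially the same approach as the paper: specialize \eqref{e:e_trans}--\eqref{e:trans} to $\ell=\text{AF}$, use $L^\text{AF}=0$ to block-diagonalize $\check F^\text{AF}$ and sever the coupling to $e_\text{p}^\text{AF}(t)$, then apply Conditions~1) and~2) to eliminate $u(t)$, $a_\text{u}(t)$, and $f_2(t)$ from the upper block, leaving only $f_1(t)$ and noise to drive $res_\text{AF}(t)$. Your write-up is in fact more explicit than the paper's (which is terse and contains a small typo in the noise term), and your closing remark on why no Rosenbrock conditions are needed here is a nice touch.
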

	\begin{proof}
		In light of Conditions 1) and 2), and setting $\ell = \text{AA}$, \eqref{e:e_trans} yields
		\begin{align*}
		\dot{\check{e}}^\text{AF}(t) =& \check{F}^\text{AF} \check{e}^\text{AF}(t) + \check{B}_\text{a}^\text{AF} a_\text{u}(t) + \check{F}_1^\text{AF} f_1(t) - \check{K}_\text{p}^\text{AF} a_\text{y}(t) \\
		&- \check{L}^\text{AF} a_\text{c}(t) +\check{N}^\text{AF}\omega(t).
		\end{align*}
		
		Moreover, by setting $L^\text{AF}=0$, the dynamics of ${e}^\text{AF}(t)$ is governed by:
		\begin{equation*}
		\dot{{e}}^\text{AF}(t) = {F}^\text{AF} {e}^\text{AF}(t) + (I-H^\text{AF} C){F}_1 f_1(t)+{N} \omega(t).
		\end{equation*}
		and consequently,  the residual signal $res_\text{AF}(t)=C{e}^\text{AF}(t)$  is only sensitive to the actuator fault $f_1(t)$. In addition, $\check{F}^\text{AF}$ should be Hurwitz in order to have a stable error dynamics ${e}^\text{AF}(t)$. This completes the proof of the Proposition 3.
	\end{proof}
	\begin{proposition}\label{theorm:s_f}
		The residual signal $res_\text{SF}(t)=y_\text{p}(t)- C\hat{x}^\text{SF}(t)$ is affected by the \underline{pseudo actuator fault} $f_2(t)$ and is decoupled from $a_\text{u}(t)$, $a_\text{y}(t)$,  and $f_1(t)$ in the sense of Definition \ref{def:decouple}, if $L^\text{SF}=0$ and the following conditions for the augmented dynamic \eqref{e:e_trans} hold:
		\begin{enumerate}
			\item $T^\text{SF} = I-H^\text{SF} C$;
			\item $(I-H^\text{SF} C)F_1 = 0$;
			\item $\check{F}^\text{SF}$ is Hurwitz.		
		\end{enumerate}
	\end{proposition}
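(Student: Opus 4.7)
The plan is to mirror the argument used for Proposition~\ref{theorm:a_f}, since the present statement is the dual for the pseudo actuator fault $f_2(t)$. I would start from the augmented error dynamics \eqref{e:e_trans} with $\ell = \text{SF}$ and then substitute the three listed conditions together with $L^\text{SF}=0$ to successively cancel the unwanted terms in the top block of $\check{e}^\text{SF}(t) = [{e^\text{SF}(t)}^\top \; {e_\text{p}^\text{SF}(t)}^\top]^\top$.

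First, Condition~1 ($T^\text{SF} = I - H^\text{SF} C$) makes the coefficient $(I - T^\text{SF} - H^\text{SF} C)$ vanish in the top block, which simultaneously eliminates the control command $u(t)$ and the actuator cyber attack $a_\text{u}(t)$ from the dynamics of $e^\text{SF}(t)$. Next, Condition~2 ($(I - H^\text{SF} C) F_1 = 0$) kills the $f_1(t)$ contribution, leaving only $f_2(t)$ and noise $\omega(t)$ in the top block through the matrix $(I - H^\text{SF} C) F_2$, which is untouched. Setting $L^\text{SF}=0$ removes the coupling $-L^\text{SF} e_\text{p}^\text{SF}(t)$ and the communication link attack term $L^\text{SF} D_\text{ac} a_\text{c}(t)$, so $\check{F}^\text{SF}$ becomes block upper-triangular with the $e^\text{SF}$-block fully decoupled from $e_\text{p}^\text{SF}$. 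Since the sensor attack $a_\text{y}(t)$ enters \eqref{e:e_trans} only through $\check{K}_\text{p}^\text{SF}$, whose top block is zero by construction, $a_\text{y}(t)$ already has no effect on $e^\text{SF}(t)$.

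Combining these observations, I would write out the reduced dynamics
\begin{equation*}
\dot{e}^\text{SF}(t) = F^\text{SF} e^\text{SF}(t) + (I - H^\text{SF} C) F_2 f_2(t) + (I - H^\text{SF} C) N \omega(t),
\end{equation*}
and conclude that $res_\text{SF}(t) = C e^\text{SF}(t)$ is driven only by $f_2(t)$ (and the noise), hence it is decoupled from $a_\text{u}(t)$, $a_\text{y}(t)$, and $f_1(t)$ in the sense of Definition~\ref{def:decouple}. Finally, I would invoke Condition~3 to note that $\check{F}^\text{SF}$ being Hurwitz guarantees that the overall error dynamics are asymptotically stable, so in steady state any nonzero $f_2(t)$ produces a persistent deviation in $res_\text{SF}(t)$ that can be captured by the threshold test of Definition~\ref{def:res}.

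No step is really an obstacle here: the proof is essentially a book-keeping verification that the listed conditions zero out exactly the right coefficient matrices in \eqref{e:e_trans}. The only mild subtlety is making sure that the decoupling from $a_\text{c}(t)$ and $e_\text{p}^\text{SF}(t)$ hinges solely on $L^\text{SF}=0$ (no condition on $L_\text{p}^\text{SF} D_\text{ac}$ is required because the $e_\text{p}^\text{SF}$ block does not feed back into $e^\text{SF}$ once $L^\text{SF}=0$), and pointing out that this is why no analog of Conditions~4), 5), 6) of Proposition~\ref{theorm:a_a} appears in the present statement. For brevity one can, as in Proposition~\ref{theorm:s_a}, state that the remaining details follow the same pattern as Proposition~\ref{theorm:a_f} and omit them.
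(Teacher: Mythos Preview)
Your proposal is correct and matches the paper's approach exactly: the paper's own proof simply sets $\ell=\text{SF}$ and says the argument follows the same lines as Proposition~\ref{theorm:a_f}, which is precisely what you outline (and in considerably more detail than the paper itself provides). Your observation about why no analog of Conditions~4)--6) of Proposition~\ref{theorm:a_a} is needed once $L^\text{SF}=0$ is a nice clarification that the paper leaves implicit.
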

	\begin{proof}
		Setting $\ell= \text{SF}$, the proof follows along similar lines to that of Proposition \ref{theorm:a_f} and is omitted for  sake of brevity.
	\end{proof}

	As stated in \cite{douioarfdf}, Conditions 2) and 3) in  Proposition \ref{theorm:a_a} are solvable if and only if
	$\text{rank}(CF_1)=\text{rank}(F_1);$
	and $\text{rank}(CF_2)=\text{rank}(F_2).$
	The next lemma provides sufficient conditions for isolability of sensors and actuator faults.
	\begin{theorem}\label{th:UIO}
		The  residuals $res_\text{AF}(t)$ and $res_\text{SF}(t)$ can be simultaneously generated to detect and isolate $f_1(t)$ and $f_2(t)$ if
		$F_1^\top F_2=0$.
	\end{theorem}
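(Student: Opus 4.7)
The plan is to build on Propositions \ref{theorm:a_f} and \ref{theorm:s_f} and argue that the orthogonality $F_1^\top F_2 = 0$ is precisely what prevents the projector used to decouple each residual from the unwanted fault from also annihilating the fault it is meant to detect.

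First, I would inherit the filter choices $L^\text{AF} = 0$ and $L^\text{SF} = 0$ from Propositions \ref{theorm:a_f} and \ref{theorm:s_f}, so that each residual reduces to $C e^\ell(t)$ driven only by the top block of the augmented error dynamics \eqref{e:e_trans}. Under the standing solvability conditions $\text{rank}(C F_1) = \text{rank}(F_1)$ and $\text{rank}(C F_2) = \text{rank}(F_2)$ stated just before the theorem, I would make the canonical Moore--Penrose choice $H^\text{AF} = F_2 (C F_2)^{+}$ and $H^\text{SF} = F_1 (C F_1)^{+}$. These satisfy $(I - H^\text{AF} C) F_2 = 0$ and $(I - H^\text{SF} C) F_1 = 0$, so the decoupling hypotheses of Propositions \ref{theorm:a_f} and \ref{theorm:s_f} are met and each residual is automatically decoupled from the unwanted fault.

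Next I would verify sensitivity. Substituting the above construction into the $e^\text{AF}(t)$ dynamics, the coefficient of $f_1(t)$ becomes $(I - H^\text{AF} C) F_1 = F_1 - F_2 (C F_2)^{+} C F_1$. Pre-multiplying by $F_1^\top$ and invoking $F_1^\top F_2 = 0$, the second term drops out, leaving $F_1^\top F_1$, which is positive definite whenever $F_1$ has full column rank. Hence $f_1(t)$ enters $e^\text{AF}(t)$ nontrivially and, combined with the Hurwitz property of $\check{F}^\text{AF}$, propagates into $res_\text{AF}(t) = C e^\text{AF}(t)$. A symmetric argument based on $F_2^\top F_1 = 0$ yields the analogous conclusion for $res_\text{SF}(t)$ with respect to $f_2(t)$.

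The main obstacle is exactly this sensitivity step, because Propositions \ref{theorm:a_f} and \ref{theorm:s_f} by themselves only rule out contamination by the wrong fault and do \emph{not} guarantee that the projector $I - F_2 (C F_2)^{+} C$ preserves the column space of $F_1$; without the orthogonality assumption, cancellation is possible and the actuator-fault residual could be rendered blind to $f_1$ by the very choice that decouples it from $f_2$. A secondary bookkeeping step is checking that both $\check{F}^\text{AF}$ and $\check{F}^\text{SF}$ can simultaneously be made Hurwitz under these pseudo-inverse choices, but this reduces to a detectability property of the pairs $(A - H^\ell C A,\, C)$, is independent of the orthogonality condition, and can be handled by appealing to the standard UIO arguments in \cite{douioarfdf}.
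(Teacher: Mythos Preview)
Your proposal is correct and reaches the same conclusion, but the route differs from the paper's. The paper argues at the level of subspace inclusions: it reads the decoupling requirement $(I-H^\text{AF}C)F_2=0$ as $\text{Im}(I-H^\text{AF}C)\subset\text{Ker}(F_2^\top)$ and the sensitivity requirement $(I-H^\text{AF}C)F_1\neq 0$ as an inclusion of $\text{Im}(F_1)$ in $\text{Im}(I-H^\text{AF}C)$, then chains these to obtain $F_1^\top F_2=0$; in effect it \emph{derives} the orthogonality condition as the compatibility constraint that any admissible $H^\text{AF}$ must satisfy, and treats the $res_\text{SF}$ side symmetrically. You instead argue constructively: you exhibit the canonical Moore--Penrose choices $H^\text{AF}=F_2(CF_2)^+$ and $H^\text{SF}=F_1(CF_1)^+$, verify decoupling directly from the rank hypotheses, and then use $F_1^\top F_2=0$ to compute $F_1^\top(I-H^\text{AF}C)F_1=F_1^\top F_1$, which is stronger than merely $(I-H^\text{AF}C)F_1\neq 0$ since it forces full column rank of the surviving fault channel. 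Your approach buys an explicit design and a clean sufficiency argument matching the ``if'' in the theorem statement; the paper's subspace argument is terser and design-agnostic but is closer in spirit to a necessity derivation, leaving the actual construction of $H^\ell$ implicit.
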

	\begin{proof}
		In order to generate the residual signal $res_\text{AF}(t)$ Condition 2) in  Proposition \ref{theorm:a_f} should hold,  which can be interpreted as requiring
		\begin{equation}\label{e:lem_iso1}
		\text{Im}(I-H^\text{AF}C) \subset \text{Ker}(F_2^\top).
		\end{equation}
		and at the same time, the impact of $f_1(t)$ should show up in the dynamics of $e(t)$, that implies $(I-H^\text{AF}C)F_1 \neq 0$. The latter condition is equivalent to
		\begin{equation}\label{e:lem_iso2}
		\text{Im}(F_1^\top) \subset \text{Im}(I-H^\text{AF}C).
		\end{equation}
		From \eqref{e:lem_iso1} and \eqref{e:lem_iso2}, it can be inferred that $\text{Im}(F_1^\top) \subset \text{Ker}(F_2^\top)$, which implies that $F_1^\top F_2=0$. Note that the case of generating the residual signal  $res_\text{SF}(t)$ provides one with the same result. This completes the proof of the Theorem 1.
	\end{proof}
	
It follows from the definitions of $F_1$ and $F_2$ that the condition $F_1^\top F_2=0$ is always satisfied. Therefore, as long as Conditions (2) and (3) in  Proposition \ref{theorm:a_a} are solvable, the actuator faults and pseudo actuator faults can be detected and isolated.
	
	\begin{remark}
		To generate the residual signals $res_\text{AA}(t)$, $res_\text{SA}(t)$, $res_\text{AF}(t)$, and $res_\text{SF}(t)$ one needs to construct a bank of eight filters (four on each side) with the states $z_\text{p}^\text{AA}(t)$, $z_\text{c}^\text{AA}(t)$, $z_\text{p}^\text{SA}(t)$, $z_\text{c}^\text{SA}(t)$, $z_\text{p}^\text{AF}(t)$, $z_\text{c}^\text{AF}(t)$, $z_\text{p}^\text{SF}(t)$, and $z_\text{c}^\text{SF}(t)$ and four UIO-based detectors with the states $\hat{x}^\text{AA}(t)$, $\hat{x}^\text{SA}(t)$, $\hat{x}^\text{AF}(t)$, and $\hat{x}^\text{SF}(t)$ according to Propositions \ref{theorm:a_a}-\ref{theorm:s_f}. In Propositions \ref{theorm:a_a} and \ref{theorm:s_a}, the matrices $K_\text{p}^\text{AA}$ and $T_\text{p}^\text{SA}$ have been utilized to decouple sensor cyber attacks and actuator cyber attacks in  sense of Definition \ref{def:decouple} from the generated residual signals, respectively. Hence, one can conclude that there is no contradiction among the conditions to generate $res_\text{AA}(t)$ and $res_\text{SA}(t)$. Subsequently, from Theorem \ref{th:UIO} it can be seen that no contradiction exists among the design conditions in the Propositions \ref{theorm:a_f} and \ref{theorm:s_f} to generate $res_\text{AF}(t)$ and $res_\text{SF}(t)$. Moreover, in Propositions \ref{theorm:a_f} and \ref{theorm:s_f}, the matrix $L^\ell$ has been employed to decouple the cyber attack signals from $res_\text{AF}(t)$ and $res_\text{SF}(t)$, which indicates that there are no contradictions in the design conditions of Propositions \ref{theorm:a_a} and \ref{theorm:s_a}.
	\end{remark}

	\section{Numerical Case Studies}\label{s:simo}
	In this section, numerical case studies are provided to demonstrate and verify the capabilities and advantages of our proposed methodology as compared to the available results in the literature. For these case studies, a bank of filters and UIO-based detectors are designed to achieve detection and isolation of cyber attacks as well as faults by using the proposed methods in the Propositions \ref{theorm:a_a}-\ref{theorm:s_f}. To simulate the covert and zero dynamics attacks the models in \cite{adaii} and \cite{ascf} are used, respectively.
	
Two types of cyber attacks are studied, namely covert attacks and zero dynamics attacks. Moreover, detection and isolation of \underline{simultaneous} actuator and sensor bias faults with  cyber attacks are also demonstrated and validated. A linear dynamical system with the following characteristic matrices and cyber attack and fault signatures is considered:
	\begin{align*}
	A^\text{s} &= \begin{bmatrix}
	-1 & 0 & 1 & 0 \\
	0 & -3 & 0 & 1 \\
	0 & 0 & -2 & 0 \\
	0 & 0 & 0 & -2
	\end{bmatrix},  \, B^\text{s}= \begin{bmatrix}
	-2 & -1 \\
	0 & -2 \\
	0 & -3 \\
	-4 & 0
	\end{bmatrix},  \\
	C^\text{s} &= \begin{bmatrix}
	0.2 & 0 & 0 & 0 \\
	0 & 0.2 & 0 & 0
	\end{bmatrix}, \,
	B_\text{a}^\text{s} = \begin{bmatrix}
	-2 & -1 \\
	0 & -2 \\
	0 & -3 \\
	-4 & 0
	\end{bmatrix},   \\
	L_1 &= \begin{bmatrix}
	-2 \\
	0 \\
	0 \\
	-4
	\end{bmatrix}, \, L_2^\text{a} = \begin{bmatrix}
	1\\
	0 \\
	0
	\end{bmatrix},
	D_{\text{ac}} = \begin{bmatrix}
	1 & 0 & 0 & 0 \\
	0 & 1 & 0 & 0 \\
	0 & 0 & 0 & 0 \\
	0 & 0 & 0 & 0
	\end{bmatrix}, \, N^\text{a}= \begin{bmatrix}
	0 \\
	1 \\
	1
	\end{bmatrix},  \\
	A^\text{a} &= \begin{bmatrix}
	-1 & 0 & 0 \\
	0 & -2 & 0 \\
	0 & 0 & -3 \\
	\end{bmatrix}, \, C^\text{a}= \begin{bmatrix}
	1 & 1 & 1 \\
	1 & 1 & 1
	\end{bmatrix}, \, N^\text{s}= \begin{bmatrix}
	1 \\
	1 \\
	1 \\
	1
	\end{bmatrix}, \\ D_\text{a} &= \begin{bmatrix}
	0.2 & 0\\
	0 & 0.2
	\end{bmatrix},
	\stepcounter{equation}\tag{\theequation}\label{e:exmpl1:sys}
	\end{align*}
	where all the input and output channels are compromised by adversaries as they have access to \underline{two}  out of the \underline{four} communication  channels. The covariance matrices of $\omega^\text{s}(t)$ and $\omega^\text{a} (t)$ are specified  as $Q=\text{diag}(0.01,\, 0.01, \, 0.01, \, 0.01)$ and $R^\text{a}=\text{diag}(0.02,\, 0.02)$, respectively. 

For the case studies, the design steps that are summarized in the Algorithms \ref{alg:woc} and \ref{alg:f} in Appendix \ref{s:apndx} are utilized. A bank of plant side filters as given by \eqref{e:f_p}, C\&C side filters as presented by \eqref{e:f_c}, and  detectors as provided in \eqref{e:UIO} are designed such that the conditions of  Propositions \ref{theorm:a_a}-\ref{theorm:s_f} are satisfied. Moreover, the residual signals $res_\text{AA}(t)$, $res_\text{SA}(t)$, $res_\text{AF}(t)$, and $res_\text{SF}(t)$ are generated according to Propositions \ref{theorm:a_a}-\ref{theorm:s_f}, respectively.

	\noindent \textbf{Scenario 1 (Zero Dynamics Attacks)}: The system presented in \eqref{e:exmpl1:sys} has a non-minimum phase zero at $s=0.3028$, that is associated with the zero state direction $x_0^\text{s}=[0, \, 0, \, -0.6514, \, 1]^\top$ and the zero input direction $u_0=[-0.5757 ,\, 0.5]^\top$. To determine the threshold for the residual signals $res_\text{AA}(t)$ and $res_\text{SA}(t)$ of the actuator and sensor cyber attacks 100 Monte Carlo simulation runs are conducted according to Remark \ref{rem:eta}, and the threshold is determined as $\eta=3.3$. The parameters of the filters and the UIO-based detector subject to actuator cyber attack are designed as follows:
	\begin{align*}
	F_\text{p}^\text{AA} &= \begin{bmatrix}
	-3 & 0 & 0 & 0\\
	0 & -2 & 0 & 0\\
	0 & 0 & -4 & 0 \\
	0 & 0 & 0 & -5
	\end{bmatrix},  \,
	T_\text{p}^\text{AA} = \begin{bmatrix}
	1 & 1 & 1 & 1 \\
	1 & 2 & 3 & 1 \\
	2 & 0 & 0 & -4 \\
	0 & 1 & 0 & 0
	\end{bmatrix}, \\
	L_\text{p}^\text{AA} &= \begin{bmatrix}
	0 & 0 & 4 & -1 \\
	0 & 0 & 3 & -2 \\
	0 & 0 & 2 & -3 \\
	0 & 0 & 5 & -1
	\end{bmatrix}, \, H^\text{AA}= \begin{bmatrix}
	5 & -5 \\
	0 & 0 \\
	0 & 0 \\
	10 & -10 \\
	0 & 1 \\
	0 & 0 \\
	0 & 0
	\end{bmatrix}, \\
	K_1^\text{AA} &= \begin{bmatrix}
	6 & -2 \\
	-3 & 1 \\
	6 & 2 \\
	3 & 1 \\
	3 & 1 \\
	6 & 2 \\
	3 & 1
	\end{bmatrix}, \,  L^\text{AA} = \begin{bmatrix}
	0 & 0 & 4 & -1 \\
	0 & 0 & 3 & -2 \\
	0 & 0 & 2 & -3 \\
	0 & 0 & 5 & -1 \\
	0 & 0 & 3 & -2 \\
	0 & 0 & 2 & -3 \\
	0 & 0 & 5 & -1
	\end{bmatrix}, \, K_\text{p}^\text{AA} = [0]_{4 \times 2},
	\end{align*}
	
	As can be seen in Fig.~\ref{fig:exp1:zero}, the residual signal $res_\text{AA}(t)=y_\text{p}(t)-C\hat{x}^\text{AA}(t)$ that is designed to detect  actuator cyber attacks has increased (due to a zero dynamics attack) while the other residuals are successfully below the threshold.
	\begin{figure}
		\centering
		\centerline{\includegraphics[width=\columnwidth]{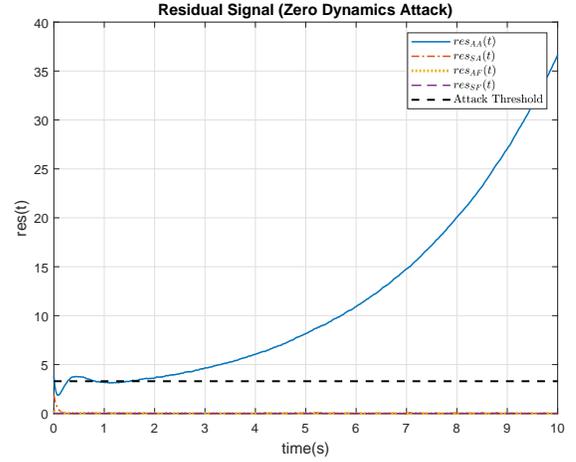}}
		\caption{Detection of a zero dynamics attack that is injected at $t=0$ (s).}\label{fig:exp1:zero}
	\end{figure}
	
	\noindent \textbf{\bf Scenario 2 (Covert Attacks):} In this scenario, a covert attack scenario is considered. The adversary is capable of completely removing the impact of actuator cyber attack $a_\text{u}(t)=[2, \, 1]^\top$ from the sensor measurements  by using the sensor cyber attack $D_{\text{a}} a_\text{y}(t)=- C x_{\text{cov}}(t)$, where $\dot{x}_\text{cov} (t) = A x_\text{cov}(t) + B_\text{a} a_\text{u}(t)$ and $x_{\text{cov}}(0)=x(0)$. The impact of this cyber attack at $t=10$ (s) can be seen on  sensor measurements on the plant side as shown in Fig.~\ref{fig:exp1:y_cov}. However, the received sensor measurements on the C\&C side do not show any anomaly in outputs. The parameters of the detector are the same as in Scenario 1, but to detect  sensor cyber attacks a set of filters are designed to satisfy the conditions that are provided in Proposition \ref{theorm:s_a} to generate $res_\text{SA}(t)$.
	
	\begin{figure}[!t]
		\centering
		\centerline{\includegraphics[width=\columnwidth]{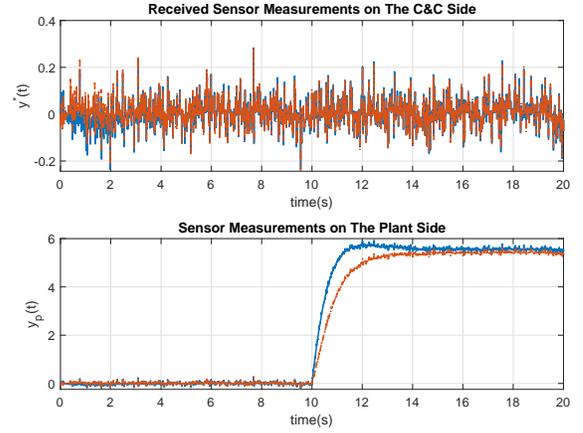}}
		\caption{Difference between  output of the system on the plant side and the C\&C side due to injection of covert attack at $t=10$ (s).}\label{fig:exp1:y_cov}
	\end{figure}
	
	As shown in Fig.~\ref{fig:exp1:cov}, the increase in  actuator and sensor cyber attacks residuals, $res_\text{AA}(t)$ and $res_\text{SA}(t)$, respectively, that exceed the threshold indicate the occurrence of these cyber attacks.
	
	\begin{figure}[!t]
		\centering
		\centerline{\includegraphics[width=\columnwidth]{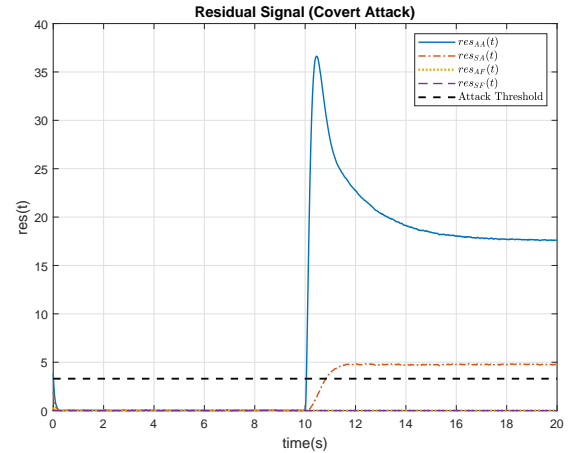}}
		\caption{Detection of actuator and sensor cyber attacks in case of covert attacks.}\label{fig:exp1:cov}
	\end{figure}
	
	\noindent \textbf{Scenario 3 (Faults)}: Using Proposition \ref{theorm:a_f}, the UIO-based detector and its corresponding residual signal $res_\text{AF}(t)$ that is sensitive to  actuator faults are first designed. Then, based on conditions in Proposition \ref{theorm:s_f} to detect  sensor faults the matrices for the UIO-based detector and the residual signal $res_\text{SF}(t)$ are selected. The threshold for residuals that are used to detect actuator and sensor faults is computed according to the method provided in Remark \ref{rem:eta} and is set to $\eta=0.6$. In this scenario, the actuator fault, $f_1(t)=40$, has occurred at $t=5$ (s) and the pseudo actuator fault, $f_2(t)=20$, also exists in the system from $t=10$ (s) onwards. It can be observed from Fig.~\ref{fig:exp1:fault} that due to occurrence of faults the corresponding residuals have been increased.
	
	\begin{figure}[!t]
		\centering
		\centerline{\includegraphics[width=\columnwidth]{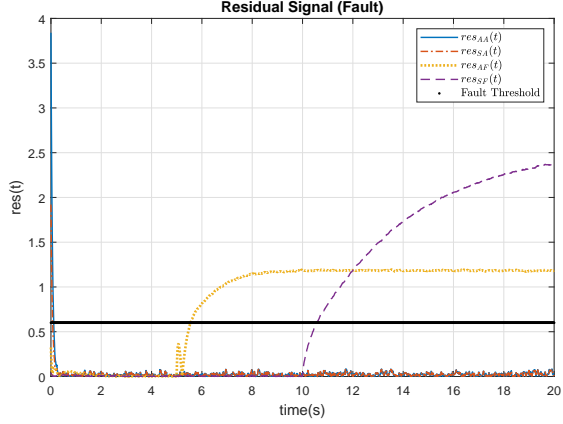}}
		\caption{Detection of actuator and sensor faults.}\label{fig:exp1:fault}
	\end{figure}

	\noindent \textbf{Scenario 4 (Simultaneous Injection of Cyber Attack and Fault)}: In this scenario,  the detection and isolation of simultaneous cyber attacks and faults is demonstrated. In this scenario, the system is under a covert attack at $t=0$ (s) and an actuator fault and sensor faults occur at $t=5$ (s) and $t=10$ (s), respectively. As depicted in Fig.~\ref{fig:exp1:attack&fault},  these anomalies can be both detected and isolated successfully.

	\begin{figure}[!t]
		\centering
		\centerline{\includegraphics[width=\columnwidth]{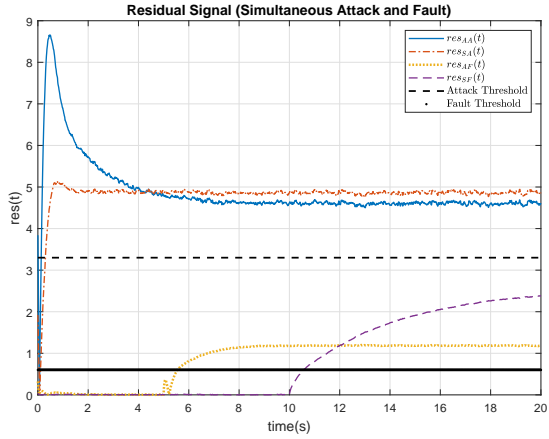}}
		\caption{Detection and isolation of different simultaneous cyber attacks and faults.}\label{fig:exp1:attack&fault}
	\end{figure}

	\noindent \textbf{Scenario 5 (Condition (9) of the Proposition \ref{theorm:a_a} is not Satisfied)}: In this scenario, we have intentionally designed our monitoring system in a manner such that Condition (9) of the Proposition \ref{theorm:a_a} is not satisfied. Therefore, we can illustrate its importance in our proposed methodology. In Fig.~\ref{fig:exp2:WOCoding}, it can be seen that when the above condition is \underline{not satisfied} the adversary is now capable of performing ``undetectable controllable attack" (refer to Remark \ref{rem:au_ac} and \cite{siolcps}) on $P_{\Sigma_\text{u}}(s)$ and completely eliminate or cancel out  impacts of the actuator cyber attack on the residual.
	
	\begin{figure}[!t]
		\centering
		\centerline{\includegraphics[width=\columnwidth]{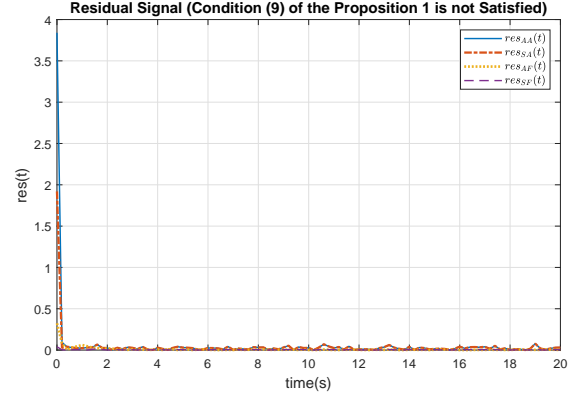}}
		\caption{Residual signals when Condition (9) of the Proposition \ref{theorm:a_a} is not satisfied.}\label{fig:exp2:WOCoding}
	\end{figure}

	\noindent \textbf{Comparative Study with Results Available  in the Literature}: In order to provide a comparison with the existing results in the literature, the proposed approach in \cite{docaazdaicps} is applied to our case studies. The following periodic modulation matrix was developed in \cite{docaazdaicps}:
	\begin{equation*}
	S(k)=\systeme{S_1 \, \, ; k=1 \, \, (0\leq t \leq t_1),
		\, \, \vdots,
		S_T \, \, ; k=T \, \, (t_{T-1}\leq t \leq t_T)}
	\end{equation*}
	where $S(k)$ is the modulation matrix on the input, $S_1, \hdots, S_T \in \mathbb{R}^{m \times m}$ are constant matrices, and $T=m$. The idea in \cite{docaazdaicps} is to disrupt the knowledge of the adversary from the system by employing the modulation $S(k)$. Using the detection method in \cite{docaazdaicps}, it is shown in Fig.~\ref{fig:exp1:comp} that despite having no actuator and sensor cyber attacks, the attack residual signal increases which \textit{misleadingly} indicates the existence of cyber attacks (false positive). However, in the same figure it is shown that by using our proposed method in Propositions \ref{theorm:a_a}-\ref{theorm:s_f} and generating $res_\text{AA}(t)$, $res_\text{SA}(t)$, $res_\text{AF}(t)$, and $res_\text{SF}(t)$, the occurrence of actuator fault in the system was correctly detected and isolated.
	
	\begin{figure}[!t]
		\centering
		\centerline{\includegraphics[width=\columnwidth]{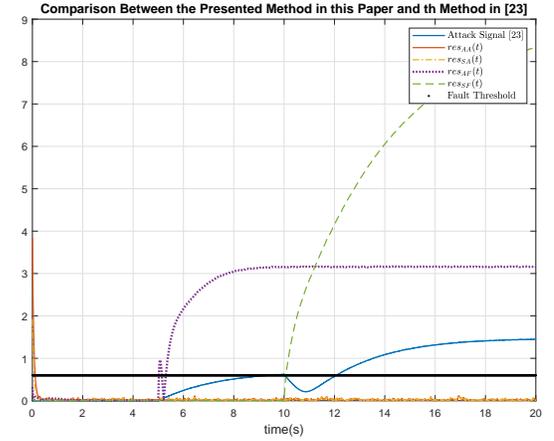}}
		\caption{False detection of cyber attack by using the proposed method in \cite{docaazdaicps} while there is only a fault in the system (actuator fault is injected from $t=5$ (s) onwards).}\label{fig:exp1:comp}
	\end{figure}

	\subsection{Quantitative performance evaluation}
	Our proposed CAFDI methodologies under different noise levels are quantitatively evaluated through 100 different Monte Carlo simulation runs. A confusion matrix \cite{FAWCETT2006861} is employed to evaluate the performance of our proposed methods. Given a classifier and its corresponding instances, four possible outcomes are specified as (1) TP (True Positive), if the instance is positive and is truly classified as positive, (2) FN (False Negative), if the instance is positive and incorrectly classified as negative, (3) TN (True Negative), if the instance is negative and correctly classified as negative, and (4) FP (False Positive), if the instance is negative and incorrectly classified as positive \cite{FAWCETT2006861}. 

Based on the possible outcomes the metric true positive rate (TPR) which indicates the rate of correct detection is used as a performance measure in this paper. This performance measure can be computed by using the expression $TPR=TP/(TP+FN)$. In this subsection, ``AA", ``SA", ``AF", and ``SF" are used to denote actuator attack, sensor attack, actuator fault, and sensor fault, respectively. The TPR results for the proposed methods that are developed in Propositions \ref{theorm:a_a}-\ref{theorm:s_f} for injection of cyber attacks and faults are presented in Tables \ref{tb:TPR_AA}-\ref{tb:TPR_SF}. 

The rows in Table \ref{tb:TPR_AA} indicate the TPR of   actuator attack (AA) detection given different scenarios for simultaneous occurrence of anomalies in the system, such as occurrence of AA and SA, AA and AF, and AA and SF. Furthermore, the second column in this table shows the computed TPR for Proposition \ref{theorm:a_a}. In Table \ref{tb:TPR_SA} the rows show the TPR of   sensor attack (SA) detection in various scenarios for simultaneous occurrence of anomalies in the system. Moreover, the second column corresponds to the computed TPR of detection for SA where Proposition \ref{theorm:s_a} is utilized. In Table \ref{tb:TPR_AF}, the computed TPR of detection of actuator fault  (AF) in presence of different anomalies are shown in the rows. Finally, the rows in Table \ref{tb:TPR_SF} indicate the TPR for sensor fault (SF) under simultaneous occurrences of  anomalies in the system.
	\begin{table}
		\caption{TPR measure for actuator attack detection according to the proposed methodology corresponding to Proposition \ref{theorm:a_a}.  }
		\begin{center}\label{tb:TPR_AA}
			\begin{tabular}{ | c | c |}
				\hline
				Types of Anomalies & TPR\% (Proposition \ref{theorm:a_a}) \\ \hline
				AA & 96\% \\ \hline
				AA \& SA & 96\% \\ \hline
				AA \& AF & 95\% \\ \hline
				AA \& SF & 96\% \\ \hline
				AA \& SA \& AF & 95\% \\ \hline
				AA \& SA \& SF & 96\% \\ \hline
				AA \& AF \& SF & 96\% \\ \hline
				AA \& SA \& AF \& SF & 95\% \\ \hline
			\end{tabular}
		\end{center}
	\end{table}

	\begin{table}
		\caption{TPR measure for sensor attack detection according to the proposed methodology corresponding to Proposition \ref{theorm:s_a}.}
		\begin{center}\label{tb:TPR_SA}
			\begin{tabular}{ | c | c |}
				\hline
				Types of Anomalies & TPR\% (Proposition \ref{theorm:s_a}) \\ \hline
				SA & 99\% \\ \hline
				SA \& AA & 99\% \\ \hline
				SA \& AF & 99\% \\ \hline
				SA \& SF & 99\% \\ \hline
				SA \& AA \& AF & 99\% \\ \hline
				SA \& AA \& SF & 99\% \\ \hline
				SA \& AF \& SF & 99\% \\ \hline
				SA \& AA \& AF \& SF & 90\% \\ \hline
			\end{tabular}
		\end{center}
	\end{table}

	\begin{table}
		\caption{TPR measure for actuator fault detection according to the proposed methodology in Proposition \ref{theorm:a_f}.}
		\begin{center}\label{tb:TPR_AF}
			\begin{tabular}{ | c | c |}
				\hline
				Types of Anomalies & TPR\% (Proposition \ref{theorm:a_f}) \\ \hline
				AF & 93\% \\ \hline
				AF \& AA & 93\% \\ \hline
				AF \& SA & 93\% \\ \hline
				AF \& SF & 93\% \\ \hline
				AF \& AA \& SA & 93\% \\ \hline
				AF \& AA \& SF & 93\% \\ \hline
				AF \& SA \& SF & 92\%  \\ \hline
				AF \& AA \& SA \& SF & 93\% \\ \hline
			\end{tabular}
		\end{center}
	\end{table}

	\begin{table}
		\caption{TPR measure for sensor fault detection according to the proposed methodology in Proposition \ref{theorm:s_f}.}
		\begin{center}\label{tb:TPR_SF}
			\begin{tabular}{ | c | c |}
				\hline
				Types of Anomalies & TPR\% (Proposition \ref{theorm:s_f}) \\ \hline
				SF & 96\% \\ \hline
				SF \& AA & 96\%  \\ \hline
				SF \& SA & 96\%  \\ \hline
				SF \& AF & 96\%  \\ \hline
				SF \& AA \& SA & 96\%  \\ \hline
				SF \& AA \& AF & 96\% \\ \hline
				SF \& SA \& AF & 96\% \\ \hline
				SF \& AA \& SA \& AF & 96\% \\ \hline
			\end{tabular}
		\end{center}
	\end{table}
	
	\section{Conclusion}\label{s:conclu}
	In this paper, the problem of simultaneous detection and isolation of machine induced faults and intelligent malicious adversarial cyber attacks has been studied. A methodology based on the cyber-physical systems (CPS) two side filters and a UIO-based detector has been proposed. In this method, a filter was designed on the plant side with its dynamics different from the C\&C side filter so that even if the adversary estimates the parameters of the C\&C side filter they cannot identify the parameters of the plant side filter. Moreover, this methodology inhibits adversaries from disguising their cyber attacks. Using the proposed strategy, one is capable of \textit{simultaneously} detecting machine induced actuator and sensor faults as well as undetectable cyber attacks, such as covert and zero dynamics attacks, and detectable cyber attacks, such as the replay attack. In future work we will consider non-ideal communication networks. Furthermore, to make the cyber-physical systems model closer to the real-world applications, we will extend the results of this paper to a multi-agent based framework.
	
	\section{Appendix}\label{s:apndx}

	\begin{algorithm}
		\caption{Pseudo code for cyber attack detection based on  Propositions \ref{theorm:a_a} and \ref{theorm:s_a}.}\label{alg:woc}
		\begin{enumerate}
			\item[] \textbf{UIO-based detector design:}
			\item Find $H^\text{AA}$ such that $(I-H^\text{AA} C)F_1 = 0$ and $(I-H^\text{AA} C)F_2 = 0$.
			\item Compute $K_1^\text{AA}$ such that $F^\text{AA}=A-H^\text{AA} CA-K_1C$ is Hurwitz.
			\item Set $T^\text{AA}=I-H^\text{AA} C$.
			\item Find $L^\text{AA}$ such that $L^\text{AA} D_{\text{ac}}=0$ and check if the Rosenbrock system matrix $$\begin{bmatrix}
			sI-F^\text{AA} & L^\text{AA} \\
			C & 0_{p \times n}
			\end{bmatrix}$$ is left-invertible, if not go to Step 1 where $H^\text{AA}$, $K_1^\text{AA}$, and $L^\text{AA}$ are changed.
			\item[] \textbf{Design of filters and residual generation subject to actuator cyber attack detection (Proposition \ref{theorm:a_a}):}
			\item Find $K_\text{p}^\text{AA}$ such that $K_\text{p}^\text{AA} D_\text{a}=0$.
			\item Compute $L_\text{p}^\text{AA}$ such that $L_\text{p}^\text{AA} D_{\text{ac}}=0$.
			\item Find a diagonal matrix $F_\text{p}^\text{AA}$ and the matrix $T_\text{p}^\text{AA}$ such that the Rosenbrock system matrix $$P_{\Sigma_\text{u}}(s) =\begin{bmatrix}
			sI-(F_\text{p}^\text{AA}+L_\text{p}^\text{AA}) & -T_\text{p}^\text{AA}B_\text{a} \\
			L^\text{AA} & 0_{(n+p_\text{f}+p) \times m_\text{a}}
			\end{bmatrix}$$ does not have any non-minimum phase zero dynamics and $\text{rank}\, (L^\text{AA}T_\text{p}^\text{AA} B_\text{a}) = \text{rank}\, (T_\text{p}^\text{AA} B_\text{a})$.
			\item Check if $(F_\text{p}^\text{AA}+L_\text{p}^\text{AA})$ is Hurwitz, if not go to Step (6).
			\item Generate the residual signal $res_\text{AA}(t)$ and compute the threshold $\eta_\text{AA}$ according to Remark \ref{rem:eta}.
			\item[] \textbf{Design of filters and residual generation subject to sensor cyber attack detection (Proposition \ref{theorm:s_a}):}
			\item Set $T^\text{SA}=T^\text{AA}$, $H^\text{SA}=H^\text{AA}$, $L^\text{SA}=L^\text{AA}$, and ${F}^\text{SA}={F}^\text{AA}$.
			\item Find $T_\text{p}^\text{SA}$ such that $T_\text{p}^\text{SA} B_\text{a}=0$.
			\item Compute $L_\text{p}^\text{SA}$ such that $L_\text{p}^\text{SA} D_{\text{ac}}=0$.
			\item Find a diagonal matrix $F_\text{p}^\text{SA}$ and the matrix $K_\text{p}^\text{SA}$ such that the Rosenbrock system matrix $$P_{\Sigma_\text{y}}(s) =\begin{bmatrix}
			sI-(F_\text{p}^\text{SA}+L_\text{p}^\text{SA}) & K_\text{p}^\text{SA} D_\text{a} \\
			L^\text{SA} & 0_{(n+p_\text{f}+p) \times p_\text{a}}
			\end{bmatrix}$$ does not have any non-minimum phase zero dynamics and $\text{rank}\, (L^\text{SA}K_\text{p}^\text{SA} D_\text{a}) = \text{rank}\, (K_\text{p}^\text{SA} D_\text{a})$.
			\item Generate the residual signal $res_\text{SA}(t)$ and compute the threshold $\eta$ according to Remark \ref{rem:eta}.
		\end{enumerate}
	\end{algorithm}

	\begin{algorithm}
		\caption{Pseudo code for fault detection based on  Propositions \ref{theorm:a_f} and \ref{theorm:s_f}.}\label{alg:f}
		\begin{enumerate}
			\item[] \textbf{UIO-based detector design and residual generation subject to actuator fault detection (Proposition \ref{theorm:a_f}):}
			\item Find $H^\text{AF}$ such that $(I-H^\text{AF}C)F_2 = 0$.
			\item Compute $K_1^\text{AF}$ such that $F^\text{AF}=A-H^\text{AF}CA-K_1^\text{AF}C$ is Hurwitz.
			\item Set $T^\text{AF}=I-H^\text{AF}C$.
			\item Set $L^\text{AF}=0$.
			\item Generate the residual signal $res_\text{AF}(t)$ and compute the threshold $\eta$ according to Remark \ref{rem:eta}.
			\item[] \textbf{UIO-based detector design and residual generation subject to sensor fault detection (Proposition \ref{theorm:s_f}):}
			\item Find $H^\text{SF}$ such that $(I-H^\text{SF}C)F_1 = 0$.
			\item Set $T^\text{SF}=I-H^\text{SF}C$ and $L^\text{SF}=L^\text{AF}$.
			\item Generate the residual signal $res_\text{SF}(t)$ and compute the threshold $\eta$ according to Remark \ref{rem:eta}.
		\end{enumerate}
	\end{algorithm}
	
	\bibliographystyle{IEEEtran}
	\bibliography{CAFDIRef}
	
\end{document}